\documentclass[journal, oneside]{IEEEtran}

\IEEEoverridecommandlockouts

\usepackage{amssymb,amsmath,amsthm, amsfonts}
\usepackage{algorithm}
\usepackage{algpseudocode}
\usepackage{array}
\usepackage{graphicx}
\usepackage{cite}
\usepackage{subcaption}
\usepackage[inline]{enumitem}
\usepackage{soul}
\usepackage{caption}
\usepackage{xcolor}

\DeclareMathOperator*{\argmin}{arg\,min}
\usepackage{lipsum}
\usepackage{balance}

\allowdisplaybreaks

\newcommand{\tp}{\mathsf{T}}
\newcommand{\tpc}{\mathsf{H}}
\newcommand{\ids}{\mathcal{I}}
\newcommand{\ar}{\mathcal{R}}
\newcommand{\ex}{\mathbb{E}}
\newcommand{\pr}{\mathrm{Pr}}
\newtheoremstyle{italiclabel}
{\topsep}   
{\topsep}   
{\normalfont}  
{}  
{\itshape}  
{.}  
{.5em}  
{}  

\theoremstyle{italiclabel}

\newtheorem{definition}{Definition}

\newtheorem{proposition}{Proposition}

\theoremstyle{normalfont}

\begin{document}
	
	\title{\huge Region-Based Constellation Designs for \\Constructive Interference Precoding in MU-MIMO}
	
	\author{Yupeng Zheng, Chunmei Xu, Jinfei Wang, Yi Ma, and Rahim Tafazolli
	}
	\markboth{}%
	{}
	
	
	\maketitle

\begin{abstract}
	The performance of constructive interference precoding (CIP) for multi-user multi-antenna (MU-MIMO) systems is governed by the structure of the constructive interference (CI) regions, yet this is overlooked in conventional constellation design. This work proposes the region-based constellation (RBC) model to lay the foundation for CIP constellation design. An RBC directly defines the mapping between messages and their feasible regions, instead of deriving them from an existing constellation. To provide insight for RBC design, we study the limitations of quadrature-amplitude-modulation (QAM)-based CIP. Analytical results show that the restrictive CI regions of QAM symbols are systematically misaligned with the objective-minimising sign pattern, resulting in a significant gap to the theoretical performance limit. From the perspective of improving sign alignment, two novel RBC schemes with non-convex feasible regions are proposed, namely mirrored-ends QAM (ME-QAM) and real-extended ME-QAM. A low-complexity algorithm is also developed for the resulting mixed-integer quadratic program, achieving a complexity comparable to QAM-based CIP. Simulation results with constellation sizes $\{16,64\}$ demonstrate up to $4$~dB signal-to-noise-ratio gain of the proposed schemes over QAM-based CIP. The proposed RBC model is also applicable to other systems with non-bijective modulation, representing a promising direction for future research.
\end{abstract}
\begin{IEEEkeywords}
	Constructive interference precoding, multi-user MIMO, region-based constellation, symbol-level precoding, constellation design.
\end{IEEEkeywords}

\section{Introduction}
Constellation designs for digital modulation have been extensively studied in the literature, with research spanning optimal geometric shaping, probabilistic shaping, and labelling to maximise spectral efficiency and error performance~\cite{Forney1989,Forney1992TrellisShaping,Caire1998,Agrell2004,Bocherer2015,Yao2020}. However, these works assume a bijective mapping between constellation points and the information they carry, which does not apply to systems where the value of a data-bearing symbol can be dynamically selected from a feasible set. This class of systems includes multi-user multi-antenna (MU-MIMO) downlink with constructive interference precoding (CIP)~\cite{Masouros2011,Li2020tutorial,Wang2024}, MU-MIMO downlink with vector perturbation precoding~\cite{Hochwald2005,Ma2016,Wang2022}, orthogonal frequency-division multiplexing (OFDM) with active constellation extension~\cite{Krongold2003,Lin2021}, and OFDM with tone injection~\cite{Tellado1998,Jacklin2013}. This paper focuses on constellation design for MU-MIMO downlink with CIP.

MU-MIMO plays a central role in modern wireless communications, enabling significant spectral efficiency gains through spatial multiplexing~\cite{Tse2005}. Precoding exploits channel state information (CSI) at the base station (BS) to simultaneously transmit information to multiple users while managing inter-user interference~\cite{Gesbert2007}. 
CIP has emerged as a promising precoding technique that shapes interference to be constructive for detection rather than eliminating it~\cite{Li2020tutorial}. Compared to conventional linear precoding such as zero-forcing (ZF) and regularized ZF~\cite{Peel2005}, CIP significantly improves reliability without sacrificing spectral efficiency.
Originally, CIP optimizes the precoding matrix to retain only the interference that pushes the received symbols toward their correct decision regions~\cite{Masouros2011}. Recent works adopt the more tractable symbol-perturbation framework, where a non-bijective modulation is followed by a channel-dependent linear precoder~\cite{Liu2022,Wang2022weighted}. 

Conventionally, an existing constellation such as quadrature amplitude modulation (QAM) or phase-shift keying (PSK) is first selected as the basis for the non-bijective modulation in CIP. Then, each constellation point is extended to a feasible region, i.e., the constructive interference (CI) region~\cite{Haqiqatnejad2018}. This procedure has two major limitations. First, the geometry of the CI regions governs the constraint structure of the CIP optimization and hence its performance, yet this is not accounted for in conventional constellation design. For instance, QPSK, which has highly flexible CI regions, achieves approximately $15$~dB signal-to-noise-ratio (SNR) gain over ZF precoding~\cite{Li2018}, while the gain drops to $5$~dB for $64$-QAM with its more restrictive CI regions~\cite{Li2020interference}. Second, the CI regions are limited to convex sets by construction, which is an unnecessary restriction for CIP. 
Our prior work~\cite{Zheng2025} shows that introducing non-convex feasible regions to a QAM constellation can improve the performance of CIP in reconfigurable-intelligent-surface-enhanced MU-MIMO systems.

The main contributions are summarized as follows:
\begin{enumerate}[leftmargin=1.6em,labelsep=0.4em,itemsep=0pt,topsep=0pt]
	\item We perform a novel analysis of the QAM-based CIP optimization problem from the perspective of the sign pattern in the constraints. It is shown analytically that only $1/2$ of the exploitable degrees of freedom (DoF) are aligned with their objective-minimizing sign pattern on average. This misalignment systematically limits the performance of QAM from the analytical bound in CIP. Moreover, a modification which introduces sign flexibility into the constraints increases this proportion to~$3/4$.
%
	\item A novel \emph{region-based constellation (RBC)} model is proposed to lift the restrictions of the CI-region-based CIP. Instead of deriving the CI regions from an existing constellation, this model directly describes the mappings between messages and their feasible regions. Enabled by this model, two novel constellation schemes are proposed: \emph{mirrored-ends QAM (ME-QAM)} and \emph{real-extended ME-QAM (RM-QAM)}. Non-convex feasible regions are applied in both schemes to enhance the sign-alignment capability. RM-QAM further introduces unconstrained DoF to obtain additional flexibility.
	\item A low-complexity algorithm is proposed to solve the resulting non-convex optimization problem with complexity comparable to QAM-based CIP. The non-convex feasible regions transform the CIP problem from a linearly-constrained quadratic program (LCQP) to a mixed-integer QP (MIQP). Solving the MIQP to global optimality requires exhaustive search over all sign patterns, incurring exponential complexity. In contrast, the proposed algorithm predicts the sign pattern via a closed-form expression.
	\item Simulation results of symbol error rate (SER) demonstrate that the proposed schemes achieve up to $4$~dB gain over QAM in CIP. Notably, ME-QAM achieves a $1$~dB gain in a regime where QAM-based CIP offers no benefit over ZF. Results under imperfect CSI confirm that the SER advantage is consistent across moderate levels of channel estimation error. Block error rate (BLER) results with channel coding additionally verify that the SER gains translate to coded performance. 
\end{enumerate}

The rest of this paper is organized as follows. Section~\ref{sec2} introduces the MU-MIMO system model and the non-bijective modulation in CIP. Section~\ref{sec3} revisits the QAM-based CIP. Section~\ref{sec4} presents the proposed RBC model, ME-QAM and RM-QAM schemes, and algorithms for MIQPs. Simulation results are presented and discussed in Section~\ref{sec5}. Section~\ref{sec6} concludes the paper. 

\textit{Notation:} Boldface lowercase and uppercase letters denote vectors and matrices, respectively, e.g., $\mathbf{a}$ and $\mathbf{A}$. Calligraphic uppercase letters denote sets, e.g., $\mathcal{C}$. $\mathbb{R}$ and $\mathbb{C}$ denote the sets of all real and complex numbers, respectively. $\Re(\cdot)$ and $\Im(\cdot)$ denote the real and imaginary parts of a complex number, respectively. $(\cdot)^\mathsf{T}$, $(\cdot)^\mathsf{H}$, $(\cdot)^*$, and $(\cdot)^\dagger$ denote the transpose, conjugate transpose, complex conjugate, and Moore--Penrose pseudoinverse, respectively. $\mathbf{I}_n$ denotes the $n\times n$ identity matrix, and $\mathbf{1}_n$ denotes the $n$-dimensional all-one vector. $\|\cdot\|$ denotes the Euclidean norm and $|\cdot|$ denotes the element-wise absolute value of a scalar or the cardinality of a set. $\mathbb{E}(\cdot)$ denotes the expectation operator and $\mathrm{Pr}\{\cdot\}$ denotes the probability of an event. $\mathcal{N}(0,\sigma^2)$ and $\mathcal{CN}(0,\sigma^2)$ denote the zero-mean real Gaussian and circularly symmetric complex Gaussian distributions with variance $\sigma^2$, respectively. $\operatorname{sgn}(\cdot)$ denotes the sign function. $\operatorname{diag}(\cdot)$ denotes the diagonal matrix formed from a vector. $\mathrm{tr}(\cdot)$ denotes the trace of a matrix. $\lfloor\cdot\rfloor$ and $\lfloor\cdot\rceil$ denote rounding down to and rounding to the nearest integer, respectively.

\section{System Model}\label{sec2}
\subsection{MU-MIMO Downlink with Linear Precoding}
We consider a MU-MIMO downlink system. A BS equipped with $N_{\mathrm{t}}$ transmit antennas simultaneously serves $K$ single-antenna users, with $K \leq N_{\mathrm{t}}$. The wireless channel is modeled as flat fading, which is standard for narrowband or OFDM per-subcarrier transmission. The channel between the $n$th transmit antenna and the $k$th user is denoted as $h_{k,n}$. This paper assumes the i.i.d. Rayleigh fading model with each $h_{k,n}$ independently following
\begin{equation}\label{eq:raychan}
	h_{k,n} \sim \mathcal{CN}(0,1), \quad \forall k \in \{1,\ldots,K\},\ n \in \{1,\ldots,N_\mathrm{t}\},
\end{equation}
which models rich scattering environments and is widely adopted in the MU-MIMO literature for its analytical tractability~\cite{Tse2005}. Unless otherwise stated, the channel is assumed to be perfectly known at the BS.

In each symbol duration, the BS aims to transmit a message $m_k$, $k=1,2,\cdots,K$, to each $k$th user simultaneously. Each $m_k$ is independently drawn from the set $\mathcal{M}=\{0,1,\cdots,M-1\}$ with equal probability, where $M$ denotes the constellation size. In this paper, we are interested in the cases where $M\ge 16$, for which canonical square QAM is the standard modulation scheme widely adopted in modern communication systems. Each message $m_k$ is then modulated into a complex symbol $s_k\in\mathbb{C}$ via a function $f$. Conventionally, $f$ is a bijective mapping from $\mathcal{M}$ to a constellation set $\mathcal{C}=\{c_m\in\mathbb{C}\mid m\in\mathcal{M}\}$.  The vector $\mathbf{s}=[s_1,\cdots,s_K]^\mathsf{T}$ is mapped to the transmit vector $\mathbf{x}\in\mathbb{C}^{N_\mathrm{t}}$, which is the collection of the transmit signal at each BS antenna. With linear precoding, $\mathbf{x}$ is given by
\begin{equation}
	\mathbf{x} = \mathbf{W}\mathbf{s},
\end{equation}
where $\mathbf{W}\in\mathbb{C}^{N_{\mathrm{t}}\times K}$ denotes the linear precoder. The received signal at user $k$ is given by
\begin{equation}\label{eq:yk}
	y_k = \alpha^{-1}\mathbf{h}_k^\mathsf{T}\mathbf{W}\mathbf{s}+v_k,
\end{equation}
where $\mathbf{h}_k = [h_{k,1},\cdots ,h_{k,N_\mathrm{t}}]^\tp \in \mathbb{C}^{N_{\mathrm{t}}}$ denotes the channel vector between the BS and user~$k$, $v_k \sim \mathcal{CN}(0,\sigma^2)$ is the additive white Gaussian noise, and $\alpha = \|\mathbf{Ws}\|$ denotes the rescaling factor which constrains the total transmit power to~$1$. 

Let $\mathbf{H} = [\mathbf{h}_1,\ldots,\mathbf{h}_K]^{\mathsf{T}} \in \mathbb{C}^{K \times N_{\mathrm{t}}}$ denote the aggregated channel matrix. Stacking the received signals of all users, the system input--output relation is given by
\begin{equation}\label{eq:bar_y}
	\mathbf{y} = \alpha^{-1}\mathbf{H}\mathbf{W}\mathbf{s} + \mathbf{v},
\end{equation}
where $\mathbf{y} = [y_1,\ldots,y_K]^{\mathsf{T}}$ and $\mathbf{v} = [v_1,\ldots,v_K]^{\mathsf{T}}$. To focus on the problem of interest, $\mathbf{W}$ is selected as the ZF precoder, which is given by the Moore-Penrose pseudoinverse as
\begin{equation}\label{eq:zf}
	\mathbf{W}=\mathbf{H}^\dagger=\mathbf{H}^\mathsf{H}(\mathbf{H}\mathbf{H}^\mathsf{H})^{-1}.
\end{equation}
By substituting \eqref{eq:zf} into \eqref{eq:yk}, inter-user interference is perfectly eliminated and each user recovers their own message based on the rescaled received signal given by
\begin{equation}
	\bar{y}_k = \alpha y_k = s_k + \alpha v_k.
\end{equation}
Each user hence experiences an amplified noise with the effective variance
\begin{equation}\label{eq:noise}
	\begin{aligned}
		\bar{\sigma}^2 &= \ex(|\alpha v_k|^2)=\alpha^2\sigma^2.
	\end{aligned}
\end{equation}

\subsection{Non-bijective Modulation in CIP}
CIP aims to reduce the noise amplification effect of ZF by minimizing $\alpha^2$ in \eqref{eq:noise}. To facilitate the discussion on constellation design, we adopt the framework in~\cite{Liu2022}, which models CIP as the cascade of the channel-dependent ZF precoder \eqref{eq:zf} and a non-bijective modulation procedure. Specifically, for a given $m_k$, $s_k$ is not necessarily mapped to a fixed constellation point, but is selected from a message-dependent feasible region $\mathcal{R}(m_k)$. The modulation function of CIP is therefore relaxed from a bijective function of $\mathbf{m}$ to
\begin{equation}\label{eq:cip}
	\begin{aligned}
		f(\mathbf{m},\mathbf{H}) =\argmin_{\mathbf{s}\in \ar(\mathbf{m})}\alpha^2 =\argmin_{\mathbf{s}\in \ar(\mathbf{m})}\mathbf{s}^\tpc(\mathbf{H}\mathbf{H}^\tpc)^{-1}\mathbf{s},
	\end{aligned}
\end{equation}
where $\ar(\mathbf{m})$ denotes the Cartesian product $\mathcal{R}(m_1) \times \cdots \times \mathcal{R}(m_K)$. This shows that for given $\mathbf{m}$ and $\mathbf{H}$, the optimal value of $\alpha^2$ depends on the following set of feasible regions
\begin{equation}\label{eq:ar}
	\mathcal{D}=\{\mathcal{R}(m)\subset \mathbb{C} \mid m \in \mathcal{M}\}.
\end{equation}
Conventionally, $\mathcal{D}$ is taken as the set of the distance-preserving CI regions of a point-based constellation $\mathcal{C}$~\cite{Haqiqatnejad2018}. The CI region corresponding to the  constellation point $c_m$ is defined as the set of all points that maintain at least the same distance as $c_m$ to each of its decision boundaries. Mathematically, the $m$th CI region is given by
\begin{equation}\label{eq:dpci}
	\mathcal{R}(m)
	=
	\Bigl\{
	s \in \mathcal{V}_m
	\Big|
	\operatorname{d}(s,\mathcal{E}_{m,i})
	\ge
	\operatorname{d}(c_m,\mathcal{E}_{m,i}),
	\ \forall i\in\mathcal{I}_m
	\Bigr\},
\end{equation}
where $\mathcal{V}_m$ denotes the Voronoi cell of $c_m$, $\mathcal{E}_{m,i}$ denotes the Voronoi edge (i.e. the maximum-likelihood (ML) decision boundary) shared between $c_m$ and its neighbor $c_i$, $\mathcal{I}_m$ collects the indices of all such neighbors, and $\operatorname{d}(s, \mathcal{E}) = \min_{u \in \mathcal{E}} |s - u|$
denotes the Euclidean distance from a point to a set. To aid interpretation of the definition, Fig.~\ref{fig:dpci} illustrates the CI regions of $16$-QAM. The constellation points at the edges correspond to half-line regions except for the four corner points, which correspond to two-dimensional bounded regions, while the CI regions of the interior points remain singleton sets.

By construction, the CI regions guarantee that the SER of the relaxed symbols does not exceed that of $\mathcal{C}$ for any fixed $\bar{\sigma}^2$ under ML detection~\cite{Haqiqatnejad2018}, ensuring that minimizing $\alpha^2$ translates directly to SER improvement. The SER under CIP, denoted $P_{\mathrm{e}}$, therefore satisfies the following union bound for a general $M$-ary $\mathcal{C}$ (see e.g.,~\cite[Sec.~4.2]{Proakis2008})
\begin{equation}\label{eq:ub}
	P_{\mathrm{e}} \le (M-1)Q\bigg(\sqrt{\frac{d_\mathrm{min}^2}{2\bar{\sigma}^2}}\bigg) = (M-1)Q\bigg(\sqrt{\frac{d_\mathrm{min}^2}{2\alpha^2\sigma^2}}\bigg),
\end{equation}
where $d_{\min}$ denotes the minimum Euclidean distance of $\mathcal{C}$ and $Q(\cdot)$ denotes the tail distribution function of the standard normal distribution. Since the bound is monotonically decreasing in $\alpha^2$ for fixed $d_{\min}$, minimizing $\alpha^2$ directly reduces the SER upper bound. Consequently, for constellations sharing the same $d_{\min}$, $\alpha^2$ serves as a consistent proxy for SER performance under CIP.

\begin{figure}
	\centering
	\includegraphics[width=0.8\linewidth]{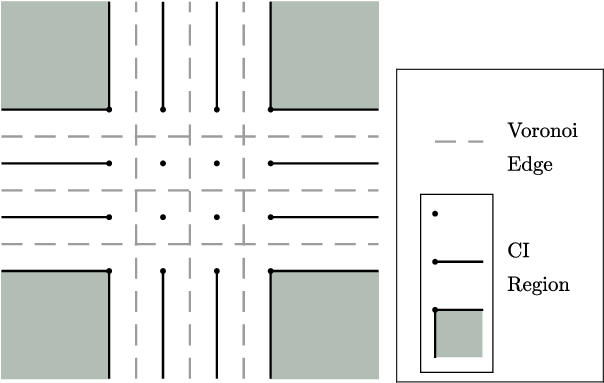}
	\caption{CI regions of 16-QAM. The dotted lines represent the Voronoi edges, and the shaded regions, rays, and singleton points denote the CI regions of corner, edge, and interior constellation points, respectively.}
	\label{fig:dpci}
\end{figure}

\section{Revisiting QAM-Based CIP}\label{sec3}
This section presents a novel analysis of the conventional QAM-based CIP 
problem, revealing the fundamental limitation imposed by the fixed-sign 
CI regions. These analytical insights directly motivate 
the novel constellation designs proposed in Section~\ref{sec4}.

Consider the square QAM constellation $\mathcal{C}_{\text{QAM}}$ with $M=L^2$, where $L\ge 4$ is an even integer. A symbol $s \in \mathcal{C}_{\text{QAM}}$ can be decomposed as
\begin{equation}
	s = \mathfrak{R}(s) + j\mathfrak{I}(s),
\end{equation}
where both the in-phase and quadrature components $\mathfrak{R}(s), \mathfrak{I}(s) \in \mathbb{R}$ take values in the same $L$-ary PAM constellation $\mathcal{C}_{\text{PAM}}$. In ascending order, the $\ell$th point in $\mathcal{C}_{\text{PAM}}$ is given by
\begin{equation}\label{eq:pam}
	c_\ell = \big(\ell - \frac{L-1}{2}\big)d_{\min}, \quad \ell \in \mathcal{L},
\end{equation}
where $\mathcal{L} = \{0,1,\cdots,L-1\}$ denotes the message set per real dimension. Without loss of generality, we set $d_{\min} = 2$ for notational simplicity throughout the remainder of this paper. \eqref{eq:pam} then simplifies to $c_\ell = 2\ell-L+1$. 

The CI regions of $L$-PAM derived from \eqref{eq:dpci} can be represented as 
\begin{equation}\label{eq:dpci_pam}
	\mathcal{R}_{\text{PAM}}(\ell) =
	\begin{cases}
		\{\,2\ell - L + 1\,\}, & \ell = 1,\dots,L-2, \\[2pt]
		(-\infty,-L+1], & \ell = 0, \\[2pt]
		[L-1,+\infty), & \ell = L-1 .
	\end{cases}
\end{equation}
Let $\phi:\mathbb{R}^2\!\to\!\mathbb{C}$ be the canonical identification $\phi(a,b)=a+j b$. Then the CI regions of $M$-QAM are given by
\begin{equation}
	\label{eq:CQAM_map}
	\mathcal{C}_{\text{QAM}}
	= \phi(\mathcal{C}_{\text{PAM}}\times\mathcal{C}_{\text{PAM}}).
\end{equation}
Owing to the independence of the real and imaginary dimensions in $\mathcal{C}_{\text{QAM}}$, it is convenient to analyze the optimization problem in \eqref{eq:cip} under QAM in the real domain. Using $\dot{(\cdot)}$ to denote the real-valued representation of a vector or matrix obtained by widely linear decomposition~\cite{Picinbono1995}, i.e., $\dot{\mathbf{s}}\in\mathbb{R}^{2K}$ and $\dot{\mathbf{H}}\in\mathbb{R}^{2K\times 2N_\mathrm{t}}$, and noting that $(\dot{\mathbf{H}}\dot{\mathbf{H}}^\tp)^{-1}$ is the real-valued representation of $(\mathbf{H}\mathbf{H}^\tpc)^{-1}$, \eqref{eq:cip} is rewritten as
\begin{subequations}\label{eq:cipqam}
	\begin{align}
		\arg\min_{\dot{\mathbf{s}}}~ &\dot{\mathbf{s}}^\mathsf{T}(\dot{\mathbf{H}}\dot{\mathbf{H}}^\mathsf{T})^{-1}\dot{\mathbf{s}} \\
		\mathrm{s.t.}~&
		\dot{\mathbf{s}}_{\mathcal{I}_\mathrm{in}}=2\boldsymbol{\ell}_{\mathcal{I}_\mathrm{in}}-L+1,\\
		& \dot{\mathbf{s}}_{\mathcal{I}_-}\le(-L+1)\mathbf{1}_{|\mathcal{I}_-|},\label{eq:cipqam_c}\\
		& \dot{\mathbf{s}}_{\mathcal{I}_+}\ge(L-1)\mathbf{1}_{|\mathcal{I}_+|},\label{eq:cipqam_d}
	\end{align}
\end{subequations}
where $\boldsymbol{\ell} = [(\mathbf{m} \bmod L)^\tp,~(\lfloor \mathbf{m}/L \rfloor)^\tp]^\tp \in \mathcal{L}^{2K}$ collects the real-domain messages corresponding to $\mathbf{m}$, with $\bmod$ denoting the elementwise modulo operator and $\lfloor\cdot\rfloor$ denoting the elementwise floor operator, the index sets $\mathcal{I}_{\mathrm{in}}$, $\mathcal{I}_{-}$, and $\mathcal{I}_{+}$ partition the entries of $\boldsymbol{\ell}$ according to the three cases in \eqref{eq:dpci_pam}, corresponding to interior, lower-end, and upper-end feasible regions, respectively, $(\cdot)_{\mathcal{I}}$ denotes the subvector formed by selecting the entries indexed by $\mathcal{I}$, and all vector inequalities are understood elementwise. 

According to \eqref{eq:dpci_pam}, an average number of $2K/L$ DoF in \eqref{eq:cipqam_c} and \eqref{eq:cipqam_d} can be exploited to minimize the objective. These DoF are constrained within feasible regions with fixed signs. To obtain analytical insight into the penalty introduced by these constraints, we analyze the relaxed problem of \eqref{eq:cipqam} which removes the constraints in \eqref{eq:cipqam_c} and \eqref{eq:cipqam_d}. Let $\mathcal{I}_\mathrm{end} = \mathcal{I}_-\cup\mathcal{I}_+$ denote the end-symbol index set. The relaxed problem is equivalent to real-domain ZF precoding~\cite{Zhang2017} based on $\dot{\mathbf{s}}_{\ids_\mathrm{in}}$ and ignoring the interference to $\dot{\mathbf{s}}_{\ids_\mathrm{end}}$. Therefore, the relaxed solution, denoted $\dot{\mathbf{s}}'$, satisfies
\begin{equation}\label{eq:xp}
	\dot{\mathbf{H}}^\dagger \dot{\mathbf{s}}' =\dot{\mathbf{H}}_{\mathcal{I}_\mathrm{in}}^\dagger \dot{\mathbf{s}}_{\mathcal{I}_\mathrm{in}},
\end{equation}
where $\dot{\mathbf{H}}_{\mathcal{I}_\mathrm{in}}$ denotes the submatrix constructed by the rows in $\dot{\mathbf{H}}$ with indices in $\mathcal{I}_\mathrm{in}$. Let $\alpha'^{2}=\dot{\mathbf{s}}'^{\mathsf{T}}(\dot{\mathbf{H}}\dot{\mathbf{H}}^\mathsf{T})^{-1}\dot{\mathbf{s}}'$ 
denote the optimal relaxed objective value. According to \eqref{eq:xp},
\begin{equation}\label{eq:ap}
	\begin{aligned}
		&\mathbb{E}(\alpha'^2)
		= \mathbb{E}\big(\dot{\mathbf{s}}_{\mathcal{I}_\mathrm{in}}^\mathsf{T}
		(\dot{\mathbf{H}}_{\mathcal{I}_\mathrm{in}}\dot{\mathbf{H}}_{\mathcal{I}_\mathrm{in}}^\mathsf{T})^{-1}
		\dot{\mathbf{s}}_{\mathcal{I}_\mathrm{in}}\big)\\
		=& E_{s,\mathrm{in}}\mathbb{E}\big(\mathrm{tr}\big(
		(\dot{\mathbf{H}}_{\mathcal{I}_\mathrm{in}}\dot{\mathbf{H}}_{\mathcal{I}_\mathrm{in}}^\mathsf{T})^{-1}
		\big)\big)\\ 
		=& E_{s,\mathrm{in}}\sum_{n=0}^{2K}\pr\{|\ids_\mathrm{in}|=n\}
		\mathbb{E}\big(\mathrm{tr}\big((\dot{\mathbf{H}}_{\mathcal{I}_\mathrm{in}}
		\dot{\mathbf{H}}_{\mathcal{I}_\mathrm{in}}^\mathsf{T})^{-1}\big)\big)
		\big|_{|\mathcal{I}_\mathrm{in}|=n},
	\end{aligned}
\end{equation}
where $ E_{s,\mathrm{in}} = \ex(\dot{s}_i^2)=\sum_{\ell=1}^{L-2}(2\ell-L+1)^2/(L-2),~i\in\mathcal{I}_\mathrm{in}$ denotes the average per-real-dimension energy of interior symbols. The second equality holds since $\dot{\mathbf{s}}_{\mathcal{I}_\mathrm{in}}$ 
and $\dot{\mathbf{H}}_{\mathcal{I}_\mathrm{in}}$ are independent.

\begin{proposition}\label{prop1}
	Let $\eta = N_\mathrm{t}/K$ denote the antenna-to-user ratio. 
	The expected optimal relaxed objective satisfies
	\begin{equation}\label{eq:ap_lb}
		\mathbb{E}(\alpha'^{2}) \ge 
		\frac{2E_{s,\mathrm{in}}}{\eta L/(L-2)-1}.
	\end{equation}
	The inequality holds exactly when $\eta > 1$, and 
	asymptotically as $K \to \infty$ when $\eta = 1$.
\end{proposition}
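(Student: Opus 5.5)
Starting from the last line of \eqref{eq:ap}, I will compute the conditional expectation of the trace exactly using real Wishart theory, and then apply Jensen's inequality over the binomially distributed $|\ids_\mathrm{in}|$.

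\emph{Step 1: the conditional expectation.} Under \eqref{eq:raychan}, each entry of $\dot{\mathbf{H}}$ is i.i.d.\ $\mathcal{N}(0,1/2)$. Conditioned on $|\ids_\mathrm{in}|=n$, the matrix $\dot{\mathbf{H}}_{\ids_\mathrm{in}}$ is an $n\times 2N_\mathrm{t}$ Gaussian matrix, since the row selection depends only on $\mathbf{m}$, which is independent of $\mathbf{H}$. Hence $2\dot{\mathbf{H}}_{\ids_\mathrm{in}}\dot{\mathbf{H}}_{\ids_\mathrm{in}}^\tp$ is real Wishart $W_n(2N_\mathrm{t},\mathbf{I}_n)$. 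The standard inverse-Wishart mean $\ex(\mathbf{W}^{-1})=\mathbf{I}_n/(2N_\mathrm{t}-n-1)$, valid for $2N_\mathrm{t}>n+1$, gives
\begin{equation*}
\ex\big(\mathrm{tr}((\dot{\mathbf{H}}_{\ids_\mathrm{in}}\dot{\mathbf{H}}_{\ids_\mathrm{in}}^\tp)^{-1})\big)\big|_{|\ids_\mathrm{in}|=n}=\frac{2n}{2N_\mathrm{t}-n-1}=:g(n).
\end{equation*}
For $n=0$ the term is zero, which is consistent with $g(0)=0$.

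\emph{Step 2: Jensen.} Each of the $2K$ real dimensions is interior independently with probability $p=(L-2)/L$, so $|\ids_\mathrm{in}|\sim\mathrm{Bin}(2K,p)$ with mean $2K(L-2)/L$. The function $g$ is convex and increasing on $[0,2N_\mathrm{t}-1)$. Jensen's inequality then yields
\begin{equation*}
\ex(\alpha'^2)\ge E_{s,\mathrm{in}}\,g\big(\ex|\ids_\mathrm{in}|\big)=E_{s,\mathrm{in}}\frac{4K(L-2)/L}{2N_\mathrm{t}-2K(L-2)/L-1}.
\end{equation*}
Dividing the numerator and denominator by $2K(L-2)/L$ gives
\begin{equation*}
\frac{2E_{s,\mathrm{in}}}{\eta L/(L-2)-1-L/(2K(L-2))}\ge\frac{2E_{s,\mathrm{in}}}{\eta L/(L-2)-1},
\end{equation*}
since the extra subtracted term is positive and only increases the ratio. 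This yields \eqref{eq:ap_lb}.

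\emph{Step 3: validity of the inverse-Wishart formula (the main obstacle).} The formula needs $2N_\mathrm{t}-n-1>0$ for every $n$ with positive probability.
\begin{itemize}
\item When $\eta>1$, we have $2N_\mathrm{t}\ge 2K+2>n+1$ for all $n\le 2K$. Every conditional expectation is then finite, and the bound holds exactly for all $K$.
\item When $\eta=1$, the event $n\ge 2K-1$ has probability at least $p^{2K}$. On this event the conditional expectation is infinite or undefined, so the exact statement is delicate.
\end{itemize}
For $\eta=1$ I would argue asymptotically. By concentration of the binomial (Chernoff), $|\ids_\mathrm{in}|/(2K)\to p<1$, and the offending events have exponentially small probability. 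Truncating to $n\le 2K(p+\epsilon)$, the conditional traces there converge to the deterministic limit $2p/(1-p)$ (the Marchenko--Pastur regime, consistent with the Wishart mean). This limit matches the right-hand side of \eqref{eq:ap_lb} at $\eta=1$, since $2/(L/(L-2)-1)=2p/(1-p)$. The asymptotic statement therefore holds as $K\to\infty$, interpreting the expectation over the typical set. Making this truncation rigorous is where I expect the most care to be needed.
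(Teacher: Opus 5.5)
Your overall structure (conditional trace, then Jensen over the binomial $|\mathcal{I}_\mathrm{in}|$, then truncation for $\eta=1$) matches the paper's. However, your Step 1 is false, and it is the crux of the proof. The entries of $\dot{\mathbf{H}}$ are not i.i.d. In the widely linear representation, which is the one for which $(\dot{\mathbf{H}}\dot{\mathbf{H}}^\mathsf{T})^{-1}$ represents $(\mathbf{H}\mathbf{H}^\mathsf{H})^{-1}$, rows $k$ and $k+K$ are $[\Re(\mathbf{h}_k^\mathsf{T}),-\Im(\mathbf{h}_k^\mathsf{T})]$ and $[\Im(\mathbf{h}_k^\mathsf{T}),\Re(\mathbf{h}_k^\mathsf{T})]$. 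These rows come from the same $\mathbf{h}_k$, are exactly orthogonal, and have equal norm. So whenever both dimensions of some user are interior (probability $p^2$ per user), $\dot{\mathbf{H}}_{\mathcal{I}_\mathrm{in}}\dot{\mathbf{H}}_{\mathcal{I}_\mathrm{in}}^\mathsf{T}$ is not real Wishart, and your $g(n)=2n/(2N_\mathrm{t}-n-1)$ overestimates the conditional trace. For example, at $n=2K$ the Gram matrix represents the complex Wishart $\mathbf{H}\mathbf{H}^\mathsf{H}$, so the trace has mean $2K/(N_\mathrm{t}-K)=2n/(2N_\mathrm{t}-n)$, not $g(2K)$.

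Overestimating a quantity you want to lower-bound breaks the chain. Your intermediate claim $\mathbb{E}(\alpha'^2)\ge E_{s,\mathrm{in}}\,g(\mathbb{E}|\mathcal{I}_\mathrm{in}|)$ is false in general. Take $K=1$, $N_\mathrm{t}=2$, $L=6$, so $p=2/3$. The conditional trace means are $1$ for $n=1$ and $2$ for $n=2$, giving $\mathbb{E}(\alpha'^2)=E_{s,\mathrm{in}}(\tfrac49+\tfrac89)=\tfrac43E_{s,\mathrm{in}}$, whereas $E_{s,\mathrm{in}}\,g(4/3)=\tfrac85E_{s,\mathrm{in}}$. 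Your final bound is still true, but only because you then discarded the $-L/(2K(L-2))$ term. Nothing in your argument shows that this slack absorbs the error.

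The missing ingredient is the paired/unpaired-row distinction on which the paper's proof rests. The $i$th diagonal entry of $(\dot{\mathbf{H}}_{\mathcal{I}_\mathrm{in}}\dot{\mathbf{H}}_{\mathcal{I}_\mathrm{in}}^\mathsf{T})^{-1}$ is the inverse squared distance from row $i$ to the span of the other retained rows.
\begin{itemize}
\item If the partner row is absent, those rows are independent of row $i$. The entry is then a scaled inverse-$\chi^2$ with $2N_\mathrm{t}-n+1$ DoF, giving your mean $2/(2N_\mathrm{t}-n-1)$.
\item If the partner is present, the paper uses $2N_\mathrm{t}-n+2$ DoF, i.e.\ mean $2/(2N_\mathrm{t}-n)$. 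You can also get the needed inequality directly. Deleting the partner from the span can only increase the distance, and the remaining $n-2$ rows are independent of row $i$. Hence the squared distance is bounded above by a $\tfrac12\chi^2_{2N_\mathrm{t}-n+2}$ variable, whose inverse has mean $2/(2N_\mathrm{t}-n)$.
\end{itemize}
Either way every diagonal mean is at least $2/(2N_\mathrm{t}-n)$. The conditional trace is therefore at least the convex function $2n/(2N_\mathrm{t}-n)$, and Jensen yields \eqref{eq:ap_lb} exactly, with no slack to discard.

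For $\eta=1$, you need neither Marchenko--Pastur nor a typical-set reinterpretation of the expectation; Marchenko--Pastur in any case controls the traces themselves, not their expectations. The terms with $n>2N_\mathrm{t}-2$ are nonnegative (indeed infinite), so dropping them preserves the inequality. The paper then applies Jensen to the truncated sum, with a correction $\epsilon\to0$ that follows from binomial concentration.
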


\begin{proof}
	See \textsc{Appendix~\ref{apl1}}.
\end{proof}
Consider a representative scenario with $16$-QAM (i.e., $L=4$) and a 
fully-loaded system (i.e., $\eta=1$). Substituting these values into~\eqref{eq:ap_lb} 
gives $\mathbb{E}(\alpha'^{2})\ge 2$, indicating that relaxing the 
end symbol constraints can at best achieve near-AWGN performance. 
However, the actual $16$-QAM CIP performance falls far short of 
this limit, as the results in~\cite{Li2020interference} show only $10$~dB SNR gain over ZF precoding at a target bit error rate (BER) 
of $10^{-3}$, whereas the AWGN bound yields more than $20$~dB under the 
same setting. The following analysis reveals how the fixed-sign CI regions fundamentally limit the performance 
of QAM-based CIP.

Let $\mathbf{z} = \mathrm{sgn}(\dot{\mathbf{s}}_{\mathcal{I}_\mathrm{end}})$ 
be the sign pattern induced by the constraints on the end symbols. Among all $\mathbf{z}\in\{\pm1\}^{|\mathcal{I}_\mathrm{end}|}$, 
$\mathbf{z}' = \mathrm{sgn}(\dot{\mathbf{s}}'_{\mathcal{I}_\mathrm{end}})$ 
yields the feasible solution closest to the unconstrained 
minimizer $\dot{\mathbf{s}}'$ in Euclidean distance, and therefore 
tends to achieve an objective value close to $\alpha'^2$. However, 
the following result shows that $\mathbf{z}$ exhibits systematic 
misalignment with $\mathbf{z}'$.
\begin{proposition}\label{prop2}
	Let $z'_i$ and $z_i$ denote the $i$th entries of 
	$\mathbf{z}'$ and $\mathbf{z}$, respectively, 
	\begin{equation}
		\Pr(z'_i = z_i) = \tfrac{1}{2}, \quad 
		\forall\, i \in \{1,\cdots,|\mathcal{I}_{\mathrm{end}}|\}.
	\end{equation}

\end{proposition}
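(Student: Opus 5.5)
Our plan is a symmetry argument: the relaxed solution $\dot{\mathbf{s}}'$ does not depend on the end-symbol messages at all, whereas $z_i$ is determined solely by whether the $i$th end symbol is a lower end ($\ell=0$, $z_i=-1$) or an upper end ($\ell=L-1$, $z_i=+1$). We therefore condition on $\mathbf{H}$, on the index sets $\mathcal{I}_\mathrm{in}$ and $\mathcal{I}_\mathrm{end}$, and on the interior messages $\boldsymbol{\ell}_{\mathcal{I}_\mathrm{in}}$, and argue that, given this information, $z_i$ is independent of $z'_i$ and uniform on $\{\pm1\}$.

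\emph{Step 1: $\dot{\mathbf{s}}'$ does not depend on $\mathbf{z}$.} By \eqref{eq:xp}, $\dot{\mathbf{s}}' = \dot{\mathbf{H}}\dot{\mathbf{H}}_{\mathcal{I}_\mathrm{in}}^\dagger\dot{\mathbf{s}}_{\mathcal{I}_\mathrm{in}}$ (multiplying both sides by $\dot{\mathbf{H}}$ and using $\dot{\mathbf{H}}\dot{\mathbf{H}}^\dagger=\mathbf{I}$). This is a function only of $\dot{\mathbf{H}}$, $\mathcal{I}_\mathrm{in}$ and $\boldsymbol{\ell}_{\mathcal{I}_\mathrm{in}}$. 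Equivalently, in the relaxed problem the end entries are unconstrained, so they are chosen freely and carry no information about which end they came from. Consequently $z'_i=\mathrm{sgn}(\dot{s}'_{i})$, for $i\in\mathcal{I}_\mathrm{end}$, is measurable with respect to the conditioning information. Ties $\dot{s}'_i=0$ occur with probability zero under the continuous Rayleigh channel \eqref{eq:raychan}.

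\emph{Step 2: $z_i$ is uniform given the conditioning.} The $2K$ real-domain messages in $\boldsymbol{\ell}$ are i.i.d. uniform on $\mathcal{L}$, because $m_k$ is uniform on $\mathcal{M}$ and the map $m\mapsto(m \bmod L,\lfloor m/L\rfloor)$ is a bijection onto $\mathcal{L}^2$. They are also independent of $\mathbf{H}$. Conditioning on the event that entry $i$ is an end symbol, $\ell_i\in\{0,L-1\}$, each of the two values has probability $1/2$. The remaining entries and $\mathbf{H}$ are independent of $\ell_i$, so this conditional law is unaffected by the other conditioning variables. Hence $\Pr(z_i=z'_i\mid \mathbf{H},\mathcal{I}_\mathrm{in},\mathcal{I}_\mathrm{end},\boldsymbol{\ell}_{\mathcal{I}_\mathrm{in}})=1/2$, whichever value $z'_i$ takes. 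Taking expectations then gives the claim for every $i$.

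The main point to handle carefully is the bookkeeping of the conditioning. $\mathcal{I}_\mathrm{end}$ is itself random, so the indexing $i\in\{1,\dots,|\mathcal{I}_\mathrm{end}|\}$ must be read conditionally on $\mathcal{I}_\mathrm{end}$. We also need to confirm that conditioning on the partition into interior and end sets reveals only $\ell_i\in\{0,L-1\}$ and not which end, and this holds since $\mathcal{I}_\mathrm{end}=\mathcal{I}_-\cup\mathcal{I}_+$ is determined by membership alone. Note that the distribution of $z'_i$ itself plays no role.
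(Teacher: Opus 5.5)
Your proof is correct, but it puts the symmetry on the other variable. The paper shows that $z'_i$ is unbiased, $\Pr(z'_i=\pm1)=\tfrac12$, by a channel-symmetry argument: a diagonal unitary $\mathbf{D}$ rotates the $k$th row of $\mathbf{H}$ so as to flip $\Re(s'_k)$, and circular symmetry of the Rayleigh law is then invoked. It then uses that $z_i$ depends only on $m_k$ and is therefore independent of $z'_i$. You never use the law of $z'_i$. From \eqref{eq:xp} you get that $z'_i$ is a function of $(\mathbf{H},\mathcal{I}_\mathrm{in},\boldsymbol{\ell}_{\mathcal{I}_\mathrm{in}})$, and given these, $\ell_i$ is still uniform on $\{0,L-1\}$.

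Your route needs nothing about $\mathbf{H}$ beyond its independence from the messages, so it holds for any channel distribution. It also makes the paper's informal independence claim precise. And it avoids the delicate step of the paper's argument: $\mathbf{D}$ depends on $\mathbf{s}'$, hence on $\mathbf{H}$. Moreover, when the other component of user $k$ is interior, the rotation does not map the constraint line $\{\Im(s_k)=\mathrm{const}\}$ to itself, so $\mathbf{D}\mathbf{s}'$ need not be the relaxed optimum for $\mathbf{D}\mathbf{H}$. Unbiasedness of $z'_i$ is nevertheless true, for instance because $\dot{\mathbf{s}}'$ is linear in the sign-symmetric $\dot{\mathbf{s}}_{\mathcal{I}_\mathrm{in}}$. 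That fact buys something yours does not: the conclusion would survive a non-uniform prior on the two end labels.

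One correction: ties are not a null event. If $\mathcal{I}_\mathrm{in}=\emptyset$, then $\dot{\mathbf{s}}'=\mathbf{0}$. If the only interior entry is the partner component of the same user, then $\dot{s}'_i=0$, because the rows $[\Re(\mathbf{h}_k^\mathsf{T}),-\Im(\mathbf{h}_k^\mathsf{T})]$ and $[\Im(\mathbf{h}_k^\mathsf{T}),\Re(\mathbf{h}_k^\mathsf{T})]$ are orthogonal. These configurations have small but positive probability. Your argument survives unchanged if $\mathbf{z}'\in\{\pm1\}^{|\mathcal{I}_\mathrm{end}|}$ is defined with any tie-break that depends only on the conditioning variables (or on independent randomness), as the paper implicitly does. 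Under the literal convention $\operatorname{sgn}(0)=0$, the probability is slightly below $\tfrac12$, so the statement itself needs such a convention.
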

\begin{proof}
	See \textsc{Appendix}~\ref{apl2}.
\end{proof}
Consequently, $\mathbf{z}$ agrees with $\mathbf{z}'$ for only half 
of the end symbol dimensions on average, contributing to the degradation in the objective relative to 
$\alpha'^2$. Next, we present a constraint modification which improves sign alignment.

\begin{proposition}\label{prop3}
	For \eqref{eq:cipqam}, arbitrarily partition $\mathcal{I}_+ = \mathcal{I}_{+,1}\cup\mathcal{I}_{+,2}$ 
	and $\mathcal{I}_- = \mathcal{I}_{-,1}\cup\mathcal{I}_{-,2}$ 
	with $|\mathcal{I}_{+,1}|=\lfloor|\mathcal{I}_+|/2\rceil$ and 
	$|\mathcal{I}_{-,1}|=\lfloor|\mathcal{I}_-|/2\rceil$, with 
	$\lfloor\cdot\rceil$ denoting rounding to the nearest integer, 
	and let $\mathcal{I}_{\mathrm{end},2} = \mathcal{I}_{+,2}\cup\mathcal{I}_{-,2}$. 
	Replace \eqref{eq:cipqam_c} and \eqref{eq:cipqam_d} with the 
	following constraints:
	\begin{subequations}\label{eq:mod}
		\begin{align}
			& \dot{\mathbf{s}}_{\mathcal{I}_{-,1}}=(-L+1)\mathbf{1}_{|\mathcal{I}_{-,1}|},\label{eq:moda}\\
			& \dot{\mathbf{s}}_{\mathcal{I}_{+,1}}=(L-1)\mathbf{1}_{|\mathcal{I}_{+,1}|},\label{eq:modb}\\
			& |\dot{\mathbf{s}}_{\mathcal{I}_{\mathrm{end},2}}|\ge(L-1)\mathbf{1}_{|\mathcal{I}_{\mathrm{end},2}|},\label{eq:sf}
		\end{align}
	\end{subequations}
	where $|\dot{\mathbf{s}}_{\mathcal{I}_{\mathrm{end},2}}|$ is understood elementwise. Under this 
	modification, $3/4$ of the entries in $\mathbf{z}$ can be 
	set equal to their counterparts in $\mathbf{z}'$ on average.
\end{proposition}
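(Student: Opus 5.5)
The plan is to split the end-symbol indices into the two classes created by the modification and count sign agreements separately in each class, reusing Proposition~\ref{prop2} for the first class.

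\emph{Class $\mathcal{I}_{\mathrm{end},1}=\mathcal{I}_{+,1}\cup\mathcal{I}_{-,1}$.} By \eqref{eq:moda}--\eqref{eq:modb}, the entries of $\mathbf{z}$ on these indices are fixed, namely $+1$ on $\mathcal{I}_{+,1}$ and $-1$ on $\mathcal{I}_{-,1}$, exactly as in \eqref{eq:cipqam}. Pinning the magnitudes to $L-1$ changes the feasible set but not the sign pattern. The relaxed solution $\dot{\mathbf{s}}'$ is also unchanged, because the relaxation discards every end-symbol constraint, so by \eqref{eq:xp} it depends only on $\dot{\mathbf{s}}_{\mathcal{I}_\mathrm{in}}$ and $\dot{\mathbf{H}}$. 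Hence $\mathbf{z}'$ is the same random vector as before. Applying Proposition~\ref{prop2} entrywise gives $\Pr(z_i=z'_i)=1/2$ for every $i\in\mathcal{I}_{\mathrm{end},1}$.

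\emph{Class $\mathcal{I}_{\mathrm{end},2}$.} Constraint \eqref{eq:sf}, $|\dot{s}_i|\ge L-1$, is the union $(-\infty,-L+1]\cup[L-1,\infty)$. Its sign is therefore free, and for each such $i$ we may choose $z_i=z'_i=\mathrm{sgn}(\dot{s}'_i)$. These choices can be made jointly for all indices in $\mathcal{I}_{\mathrm{end},2}$ because the constraints are separable across coordinates. A sign is attainable whenever $\dot{s}'_i\neq 0$. Since $\dot{s}'_i$ is a nontrivial linear function of the Gaussian channel entries through \eqref{eq:xp}, it is a continuous random variable, so $\dot{s}'_i=0$ occurs with probability zero. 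On $\mathcal{I}_{\mathrm{end},2}$ the agreement probability is therefore $1$.

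\emph{Averaging.} Condition on $|\mathcal{I}_+|=p$ and $|\mathcal{I}_-|=q$. The expected number of agreements is
\begin{equation*}
\tfrac12\Big(\lfloor p/2\rceil+\lfloor q/2\rceil\Big)+\Big(p-\lfloor p/2\rceil\Big)+\Big(q-\lfloor q/2\rceil\Big).
\end{equation*}
For even $p,q$ this equals exactly $\tfrac34(p+q)$. When $p$ or $q$ is odd, the rounding introduces an error of at most $1/4$ per set.

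The main obstacle is making ``$3/4$ on average'' precise given this rounding. I would first try symmetry. Each entry of $\boldsymbol{\ell}$ is uniform on $\mathcal{L}$, so $p$ and $q$ are identically distributed Binomial$(2K,1/L)$ variables. Under a convention where ties round half up in $\mathcal{I}_{+}$ and half down in $\mathcal{I}_{-}$, or with a randomized tie-breaking rule, the odd-case errors cancel in expectation. Otherwise I would state the result as $\tfrac34$ of the entries up to an $O(1)$ term, which vanishes relative to $\ex|\mathcal{I}_{\mathrm{end}}|=2K\cdot 2/L$. The final step is to divide the expected number of agreements by $\ex|\mathcal{I}_{\mathrm{end}}|$. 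An alternative is to read the statement as a per-index claim for a uniformly chosen index in $\mathcal{I}_{\mathrm{end}}$: that index falls in either class with probability about $1/2$, giving $\tfrac12\cdot\tfrac12+\tfrac12\cdot1=\tfrac34$.
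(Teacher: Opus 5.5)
Your proposal is correct and is essentially the paper's own argument. You split $\mathcal{I}_{\mathrm{end}}$ into $\mathcal{I}_{\mathrm{end},1}$, where Proposition~\ref{prop2} gives alignment probability $1/2$, and $\mathcal{I}_{\mathrm{end},2}$, where the absolute value in \eqref{eq:sf} lets every sign be matched, giving $\tfrac12\cdot\tfrac12+\tfrac12=\tfrac34$; your observation that $\dot{\mathbf{s}}'$ is unchanged by the modification and your symmetric handling of the rounding make explicit what the paper compresses into ``approximately half''.

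One minor slip in a side remark: $\dot{s}'_i=\dot{\mathbf{H}}_{\{i\}}\dot{\mathbf{H}}^\dagger_{\mathcal{I}_\mathrm{in}}\dot{\mathbf{s}}_{\mathcal{I}_\mathrm{in}}$ is rational rather than linear in the channel, and it can vanish with small but positive probability. This happens, e.g.\ when $\mathcal{I}_\mathrm{in}$ is empty or consists only of the index paired with $i$, whose row is orthogonal to row $i$, so ``probability zero'' should be weakened accordingly; this does not affect the averaged claim.
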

\begin{proof}
	Let $\mathcal{I}_{\mathrm{end},1} = \mathcal{I}_{+,1}\cup\mathcal{I}_{-,1}$. 
	$\dot{\mathbf{s}}_{\mathcal{I}_{\mathrm{end},1}}$ and 
	$\dot{\mathbf{s}}_{\mathcal{I}_{\mathrm{end},2}}$ each contain
	approximately half of the entries in $\dot{\mathbf{s}}_{\mathcal{I}_\mathrm{end}}$. 
	By Proposition~\ref{prop2}, each entry of 
	$\dot{\mathbf{s}}_{\mathcal{I}_{\mathrm{end},1}}$ aligns in sign 
	with its counterpart in $\dot{\mathbf{s}}'_{\mathcal{I}_{\mathrm{end},1}}$ 
	with probability $1/2$, contributing $1/2\cdot1/2=1/4$ aligned 
	entries on average. On the other hand, 
	$\dot{\mathbf{s}}_{\mathcal{I}_{\mathrm{end},2}}$ can always be 
	set to align with $\dot{\mathbf{s}}'_{\mathcal{I}_{\mathrm{end},2}}$ 
	due to the absolute operator in \eqref{eq:sf}, contributing 
	another $1/2$ aligned entries. Therefore, 
	on average $1/4+1/2=3/4$ of entries in $\mathbf{z}$ can be aligned with their counterparts in $\mathbf{z}'$.
\end{proof}
Consequently, the modification achieves an average increase of 
$1/4$ in the proportion of sign alignment. Moreover, it guarantees at least 
$|\mathcal{I}_{\mathrm{end},2}|$ aligned signs, preventing highly 
unaligned cases where most signs in $\mathbf{z}$ disagree with 
$\mathbf{z}'$ and consequently lead to a large objective value.

Despite the advantage in sign alignment, the sign flexibility in \eqref{eq:sf} introduces ambiguity that prevents 
users from distinguishing between messages $\ell=0$ and $\ell=L-1$ 
according to \eqref{eq:dpci_pam}, violating the SER upper bound 
in~\eqref{eq:ub}. Therefore, this modification is not directly 
applicable to practical CIP. Nevertheless, the next section 
demonstrates that feasible constellation schemes can be developed based on this modification.

\section{Region-Based Constellation Designs for CIP}\label{sec4}
This section proposes two novel constellation schemes named ME-QAM and RM-QAM which involve non-convex feasible regions. Such a region cannot be modeled as the CI region of a constellation point, which is convex by the definition in \eqref{eq:dpci}. As a result, the proposed schemes cannot be described by the conventional point-based constellation model, which motivates the following RBC model.

\begin{definition}[RBC]
	An $M$-ary RBC is a collection of $M$ feasible regions indexed by 
	distinct messages, given by
	\begin{equation}
		\mathcal{D} = \{\mathcal{R}(m)\subset \mathbb{C} \mid m \in \mathcal{M}\}.
	\end{equation}
\end{definition}
This model allows $\mathcal{R}(m)$ to be any subset of $\mathbb{C}$, removing the convexity restriction inherent in the CI region construction. The optimal RBC for the CIP modulation function \eqref{eq:cip} with given $K$, $N_\mathrm{t}$, and distribution of $\mathbf{H}$ is then obtained by solving
\begin{equation}\label{eq03}
	\argmin_{\mathcal{D}\in\mathcal{A}}~\mathbb{E}(\alpha^2),
\end{equation}
where the expectation is taken over both $\mathbf{m}$ and $\mathbf{H}$, and $\mathcal{A}$ denotes the class of RBCs that satisfy the SER upper bound in \eqref{eq:ub}. Due to the removal of the reference constellation, $d_{\min}$ of a RBC is redefined as the minimum distance between two feasible regions, which is given by
\begin{equation}
	d_{\min} = \min_{m \ne n,\, p \in \mathcal{R}(m),\, q \in \mathcal{R}(n)} |p - q|.
\end{equation}

Directly solving~\eqref{eq03} is analytically intractable, as the feasible 
set $\mathcal{A}$ admits no tractable general parameterization. Rather than 
deriving the globally optimal RBC, our goal is to design feasible RBCs which inherit the sign-alignment advantage obtained by the constraint modification in \eqref{eq:mod}.

\subsection{ME-QAM}\label{sec4b}
\begin{definition}[ME-QAM]
	The $M$-ary ME-QAM RBC with $M=L^2$ is defined by
	\begin{equation}\label{eq:meqam}
		\mathcal{D}_{\mathrm{ME\text{-}QAM}}
		=
		\phi(\mathcal{D}_{\mathrm{ME\text{-}PAM}} \times \mathcal{D}_{\mathrm{ME\text{-}PAM}}),
	\end{equation}
	where $\mathcal{D}_{\mathrm{ME\text{-}PAM}}$ is the one-dimensional $L$-ary ME-PAM RBC whose feasible regions are defined by
	\begin{equation}\label{eq:mepam}
		\mathcal{R}(\ell)
		=
		\begin{cases}
			\{\,2\ell-L+2\,\}, & \ell = 0,1,\dots,L-2,\\[2pt]
			(-\infty,-L]\cup[L,\infty), & \ell = L-1,
		\end{cases}
	\end{equation}
	where $\mathcal{R}(L-1)$ is referred to as the \emph{sign-flexible (SF)} region.
\end{definition}
\textsc{Appendix~\ref{apme}} proves that ME-QAM satisfies the SER upper bound in \eqref{eq:ub} and hence belongs to class $\mathcal{A}$. 

Fig.~\ref{fig:me} illustrates the ME-QAM RBC for the representative case $M=16$, where regions sharing the same label are assigned to the same message $m$. In particular, the four corner regions labeled $0$ correspond to the symbol assigned SF regions in both dimensions. 

The real-domain CIP optimization problem with ME-QAM is given by
\begin{equation}\label{eq:cipme}
	\begin{aligned}
		\arg\min_{\dot{\mathbf{s}},\boldsymbol{\psi}}~ &\dot{\mathbf{s}}^\mathsf{T}(\dot{\mathbf{H}}\dot{\mathbf{H}}^\mathsf{T})^{-1}\dot{\mathbf{s}} \\
		\mathrm{s.t.}~&
		\dot{\mathbf{s}}_{\mathcal{I}_\mathrm{in,ME}}=2\boldsymbol{\ell}_{\mathcal{I}_\mathrm{in,ME}}-L+2,\\
		& \boldsymbol{\psi}\odot\dot{\mathbf{s}}_{\mathcal{I}_\mathrm{end,ME}}\ge L\mathbf{1}_{|\mathcal{I}_\mathrm{end,ME}|},\\
		& \boldsymbol{\psi}\in\{\pm1\}^{|\mathcal{I}_\mathrm{end,ME}|},
	\end{aligned}
\end{equation}
where $\mathcal{I}_\mathrm{in,ME}$ and $\mathcal{I}_\mathrm{end,ME}$ partition the entries of $\boldsymbol{\ell}$ according to the first and second cases in \eqref{eq:mepam}, respectively, and $\odot$ denotes the Hadamard product.

Next, we demonstrate the similarity between problems \eqref{eq:cipme} and \eqref{eq:cipqam} with modifications in \eqref{eq:mod}. Since $\ex(|\mathcal{I}_\mathrm{in,ME}|)=\ex(|\mathcal{I}_\mathrm{in}|+|\mathcal{I}_{-,1}|+|\mathcal{I}_{+,1}|)=2K(L-1)/L$, both problems involve the same average number of singleton symbols which are symmetrically distributed within the range $[-L+1,L-1]$ or $[-L+2,L-2]$, respectively. On the other hand, since $\ex(|\mathcal{I}_\mathrm{end,ME}|)=\ex(|\mathcal{I}_\mathrm{end,2}|)=2K/L$, \eqref{eq:cipme} has the same amount of SF constraints on average as in \eqref{eq:sf}, whose boundaries are slightly shifted outward by $1$ unit on the real axis. Therefore, both problems exhibit highly similar structure, indicating that ME-QAM inherits the sign-alignment benefits described in \textit{Proposition}~\ref{prop3}.

\begin{figure}[t]
	\centering
	\begin{subfigure}[t]{0.8\linewidth}
		\centering
		\includegraphics[width=\linewidth]{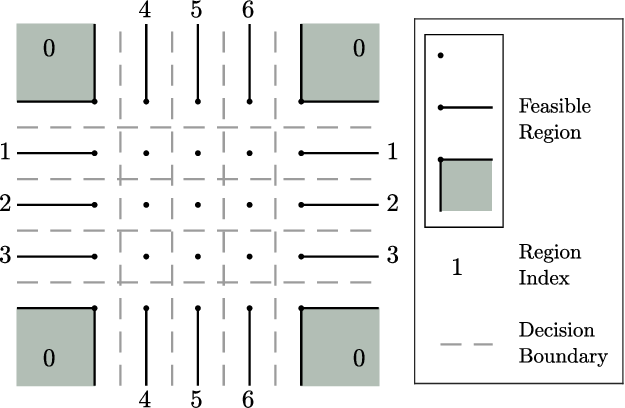}
		\caption{ME-QAM}
		\label{fig:me}
		\vspace{0.2cm}
	\end{subfigure}
	\begin{subfigure}[t]{0.8\linewidth}
		\centering
		\includegraphics[width=\linewidth]{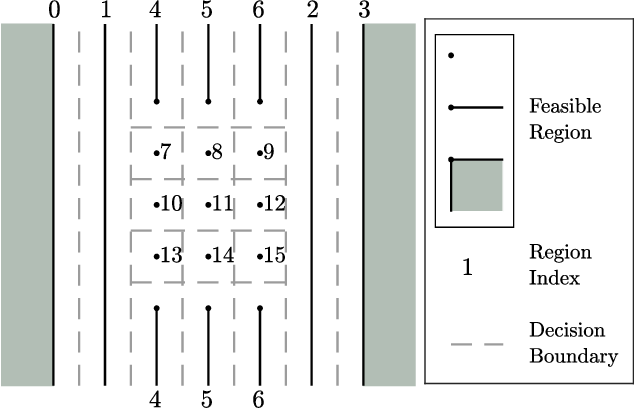}
		\caption{RM-QAM}
		\label{fig:rm}
	\end{subfigure}
	\caption{
		Feasible regions of (a) ME-QAM and (b) RM-QAM for $M=16$. Regions sharing the same label are assigned to the same message $m$. The feasible regions with labels $\{0,1,2,3\}$ in (a) ME-QAM are replaced by the corresponding regions in (b) RM-QAM, which are unbounded in the imaginary direction. The rest of the feasible regions remain identical in both constellations.}
	\label{fig:me_rm}
\end{figure}

Despite the advantage, ME-QAM incurs an energy penalty 
relative to QAM due to its outward-shifted constellation, which increases 
the baseline value of $\alpha^2$. For conventional point-based constellations, 
this penalty is naturally quantified by the average symbol energy. We extend 
this measure to RBCs by adopting the average minimum symbol energy given by
\begin{equation}
	E_s = \frac{1}{M}\sum_{m=1}^{M}\min_{s_m\in\mathcal{R}(m)}|s_m|^2,
\end{equation}
which selects the lowest-energy point in each feasible region and 
is independent of the channel distribution. For ME-QAM, $E_s = {2(M+2)}/{3}$, which is derived from 
\eqref{eq:meqam} and \eqref{eq:mepam}. Compared with the average symbol 
energy of QAM, ${2(M-1)}/{3}$, ME-QAM increases $E_s$ by a 
factor of $(M+2)/(M-1)$, which tends to $1$ as $M\rightarrow\infty$, 
confirming that the relative energy penalty diminishes for large constellation sizes.

\subsection{RM-QAM}
In this subsection, we propose the RM-QAM RBC, which is developed from 
ME-QAM by replacing a portion of the SF regions with further relaxed regions that are 
unconstrained in the imaginary dimension.
\begin{definition}[RM-QAM]
	The $M$-ary RM-QAM RBC with $M=L^2$ is defined by
	\begin{equation}\label{eq:rm}
		\begin{aligned}
			&\mathcal{D}_{\mathrm{RM\text{-}QAM}}
			=
			\left\{
			\mathcal{R}(m)\in\mathcal{D}_{\mathrm{ME\text{-}QAM}}
			:\;
			\min\bigl(|\mathfrak{R}(\mathcal{R}(m))|\bigr)\neq L
			\right\} \\
			&\cup
			\{
			\mathcal{R} :
			\mathfrak{R}(\mathcal{R})=\{\pm(L+2\ell)\},\;\ell=0,\dots,L/2-2,\\
			&~~\quad\mathfrak{I}(\mathcal{R})=\mathbb{R}
			\} \\
			&\cup
			\left\{
			\mathcal{R} :
			\mathfrak{R}(\mathcal{R})=[2L-2,\infty),\;
			\mathfrak{I}(\mathcal{R})=\mathbb{R}
			\right\} \\
			&\cup
			\left\{
			\mathcal{R} :
			\mathfrak{R}(\mathcal{R})=(-\infty,-2L+2],\;
			\mathfrak{I}(\mathcal{R})=\mathbb{R}
			\right\}.
		\end{aligned}
	\end{equation}
\end{definition}
Fig.~\ref{fig:rm} illustrates the $16$-ary RM-QAM RBC, where the labels 
represent the messages $m$ assigned to each region. The following discussion 
is based on this example; the extension to other constellation sizes is straightforward. The first set in 
\eqref{eq:rm} retains all feasible regions from ME-QAM whose real 
components do not span both ends of the RBC, corresponding to the singleton 
regions $m\in\mathcal{M}_1=\{7,\cdots,15\}$ and the SF regions $m\in\mathcal{M}_2=\{4,5,6\}$ 
in Fig.~\ref{fig:rm}. The remaining three sets replace the SF 
regions of ME-QAM with new regions that are unconstrained in the imaginary 
dimension: the second set introduces regions with fixed real components 
($m\in\mathcal{M}_3=\{1,2\}$), while the third and fourth sets 
introduce regions with semi-infinite real components at the right and left 
ends of the RBC ($m\in\mathcal{M}_4=\{0,3\}$), respectively. 
 
\textsc{Appendix~\ref{aprm}} proves that RM-QAM belongs to 
class $\mathcal{A}$. 

The real-domain CIP optimization problem with RM-QAM is given by
\begin{equation}\label{eq:ciprm}
	\begin{aligned}
		\arg\min_{\dot{\mathbf{s}},\boldsymbol{\psi}}~& \dot{\mathbf{s}}^\mathsf{T}(\dot{\mathbf{H}}\dot{\mathbf{H}}^\mathsf{T})^{-1}\dot{\mathbf{s}} \\
		\mathrm{s.t.}~&
		\dot{\mathbf{s}}_{\ids_1}=\mathfrak{R}(\ar(\mathbf{m}_{\ids_1})),~\dot{\mathbf{s}}_{\ids_1+K}=\mathfrak{I}(\ar(\mathbf{m}_{\ids_1})),\\
		&\dot{\mathbf{s}}_{\ids_2}=\mathfrak{R}(\ar(\mathbf{m}_{\ids_2})),~\boldsymbol{\psi}\odot\dot{\mathbf{s}}_{\ids_2+K}\ge L\mathbf{1}_{|\ids_2|},\\
		&\dot{\mathbf{s}}_{\ids_3}=\mathfrak{R}(\ar(\mathbf{m}_{\ids_3})), \boldsymbol{\theta}\odot\dot{\mathbf{s}}_{\ids_4}\ge (2L-2)\mathbf{1}_{|\ids_4|},\\
		&\boldsymbol{\psi}\in\{\pm1\}^{|\mathcal{I}_{2}|},~\dot{\mathbf{s}}_{(\ids_3\cup\ids_4)+K}\in\mathbb{R}^{|\ids_3\cup\ids_4|},
	\end{aligned}
\end{equation}
where $\ids_n$ denotes the set of indices of $m_k\in\mathcal{M}_n$ in $\mathbf{m}$ and $\boldsymbol{\theta}$ denotes the fixed sign pattern corresponding to $\ids_4$. Since the first and second halves of $\dot{\mathbf{s}}$ correspond to the 
real and imaginary parts of $\mathbf{s}$, respectively, the indices shifted by $K$ in \eqref{eq:ciprm} correspond to imaginary-part entries, while the unshifted indices correspond to the real-part entries.

Due to the asymmetric structure, RM-QAM further increases $E_s$ relative 
to ME-QAM, e.g., by approximately $0.3$~dB for $M=16$. In contrast, the free DoF in \eqref{eq:ciprm} is beneficial for reducing $\alpha^2$. As long as the gain brought by the free DoF in \eqref{eq:ciprm} outweighs the penalty in $E_s$, RM-QAM can outperform ME-QAM. This gain can be approximated by $\Delta$ in the following proposition.

\begin{proposition}\label{prop4}
	Arbitrarily partition $\ids_\mathrm{end,ME}$ in \eqref{eq:cipme} into $ \mathcal{I}_{A}\cup\mathcal{I}_{B}$ such that $|\mathcal{I}_{B}| = \lfloor |\mathcal{I}_\mathrm{end}|/2$$\rceil$. Define $\mathbf{Q}=(\dot{\mathbf{H}}\dot{\mathbf{H}}^\mathsf{T})^{-1}$, and let $\dot{\mathbf{s}}^\star$ be an optimal solution. Removing the constraints on $\dot{\mathbf{s}}_{\mathcal{I}_B}$ results in a reduction in the optimal objective value. Denote this reduction as $\Delta$, which satisfies the following lower bound
	\begin{equation}
		\Delta\ge
		\underline{\Delta}
		=
		\mathbf{g}_{\mathcal{I}_B}^\mathsf{T}
		\mathbf{Q}_{\mathcal{I}_B\mathcal{I}_B}^{-1}
		\mathbf{g}_{\mathcal{I}_B},
	\end{equation}
	where $\mathbf{g}=\mathbf{Q}\dot{\mathbf{s}}^\star$, and $\mathbf{A}_{\mathcal{I}_{a}\mathcal{I}_{b}}$ denotes the submatrix of $\mathbf{A}$ formed by selecting the rows and columns indexed by $\mathcal{I}_{a}$ and $\mathcal{I}_{b}$.
\end{proposition}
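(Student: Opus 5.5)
The plan is to exhibit a feasible point for the relaxed problem (constraints on $\dot{\mathbf{s}}_{\mathcal{I}_B}$ removed) whose objective value is exactly $\alpha^{\star 2}-\underline{\Delta}$, where $\alpha^{\star 2}=\dot{\mathbf{s}}^{\star\mathsf{T}}\mathbf{Q}\dot{\mathbf{s}}^\star$ is the optimal value of \eqref{eq:cipme}. The relaxed optimum is no larger than this value, which gives $\Delta\ge\underline{\Delta}$. The natural candidate keeps every entry of $\dot{\mathbf{s}}^\star$ outside $\mathcal{I}_B$ fixed and re-optimizes only the now-free entries $\dot{\mathbf{s}}_{\mathcal{I}_B}$. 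This point is feasible because the entries in $\mathcal{I}_\mathrm{in,ME}\cup\mathcal{I}_A$ still satisfy their original constraints with the original sign choices $\boldsymbol{\psi}$ restricted to $\mathcal{I}_A$, and the entries in $\mathcal{I}_B$ are unconstrained.

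First, write the perturbation $\dot{\mathbf{s}}=\dot{\mathbf{s}}^\star+\mathbf{E}\boldsymbol{\delta}$, where $\mathbf{E}$ selects the coordinates in $\mathcal{I}_B$ and $\boldsymbol{\delta}\in\mathbb{R}^{|\mathcal{I}_B|}$. Expanding the quadratic gives
\begin{equation*}
\dot{\mathbf{s}}^\mathsf{T}\mathbf{Q}\dot{\mathbf{s}}=\alpha^{\star 2}+2\boldsymbol{\delta}^\mathsf{T}\mathbf{g}_{\mathcal{I}_B}+\boldsymbol{\delta}^\mathsf{T}\mathbf{Q}_{\mathcal{I}_B\mathcal{I}_B}\boldsymbol{\delta},
\end{equation*}
using $\mathbf{E}^\mathsf{T}\mathbf{Q}\dot{\mathbf{s}}^\star=\mathbf{g}_{\mathcal{I}_B}$ and $\mathbf{E}^\mathsf{T}\mathbf{Q}\mathbf{E}=\mathbf{Q}_{\mathcal{I}_B\mathcal{I}_B}$.

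Second, note that $\mathbf{Q}$ is positive definite, since $\dot{\mathbf{H}}$ has full row rank almost surely under \eqref{eq:raychan}. Hence its principal submatrix $\mathbf{Q}_{\mathcal{I}_B\mathcal{I}_B}$ is positive definite and invertible. The minimizer over $\boldsymbol{\delta}$ is $\boldsymbol{\delta}=-\mathbf{Q}_{\mathcal{I}_B\mathcal{I}_B}^{-1}\mathbf{g}_{\mathcal{I}_B}$, and substituting it gives the value $\alpha^{\star 2}-\mathbf{g}_{\mathcal{I}_B}^\mathsf{T}\mathbf{Q}_{\mathcal{I}_B\mathcal{I}_B}^{-1}\mathbf{g}_{\mathcal{I}_B}$.

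Third, the relaxed optimum is at most this value, so $\Delta\ge\underline{\Delta}$. Positive definiteness also gives $\underline{\Delta}\ge 0$, which confirms that removing the constraints does produce a reduction.

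The only delicate point is the feasibility bookkeeping for the mixed-integer problem. One must check that fixing $\boldsymbol{\psi}$ on $\mathcal{I}_A$ at its optimal values keeps $\dot{\mathbf{s}}^\star_{\mathcal{I}_A}$ feasible, and that the relaxed problem's optimum over all sign patterns is at most the value at this particular pattern; both are immediate. The bound is loose only because the entries in $\mathcal{I}_A$ are not re-optimized jointly with those in $\mathcal{I}_B$.
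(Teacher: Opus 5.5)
Your proof is correct and is essentially the paper's argument. You fix $\dot{\mathbf{s}}^\star$ outside $\mathcal{I}_B$, minimize freely over the entries in $\mathcal{I}_B$, and use the fact that this point is feasible for the relaxed problem, so its value $\alpha^{\star 2}-\underline{\Delta}$ upper-bounds the relaxed optimum; the only difference is cosmetic, since you complete the square directly in the perturbation $\boldsymbol{\delta}$, whereas the paper computes the restricted minimum $J$ via a Schur complement and then expands $\alpha^2-J$ into $\mathbf{g}_{\mathcal{I}_B}^\mathsf{T}\mathbf{Q}_{\mathcal{I}_B\mathcal{I}_B}^{-1}\mathbf{g}_{\mathcal{I}_B}$.
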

\begin{proof}
	See \textsc{Appendix~\ref{apl3}}.
\end{proof}
Since the characteristics of $\mathbf{Q}_{\mathcal{I}_B\mathcal{I}_B}^{-1}$ depend on both $M$ and the dimension of $\mathbf{H}$, the performance gain of RM-QAM varies in different scenarios, which will be further examined via simulations in Section~\ref{sec5}. 

\begin{algorithm}[t]
	\caption{Predicted-Sign QP (PS-QP) for ME-QAM and RM-QAM.}
	\label{alg:sps}
	\begin{algorithmic}[1]
		\Statex Input: $\mathbf{H}$, $\boldsymbol{\ell}$, $M$, \textbf{Output}: $\dot{\mathbf{s}}^\diamond$.
		\State For ME-QAM, obtain $\mathcal{I}_\mathrm{SF} = \mathcal{I}_\mathrm{end,ME}$ and 
		$\mathcal{I}_\mathrm{F} = \mathcal{I}_\mathrm{in,ME}$ according to \eqref{eq:mepam} and \eqref{eq:cipme}. For RM-QAM, $\mathcal{I}_\mathrm{SF} = 
		\mathcal{I}_2 + K$ and  $\mathcal{I}_\mathrm{F} = 
		(\mathcal{I}_1+K)\cup\bigcup_{n=1}^4\mathcal{I}_n$  according to \eqref{eq:rm} and \eqref{eq:ciprm}.
		\State Compute $\hat{\boldsymbol{\psi}}$ according to \eqref{eq:sgn_pre}.
		\State Substitute $\boldsymbol{\psi} \leftarrow \hat{\boldsymbol{\psi}}$ 
		in \eqref{eq:cipme}/\eqref{eq:ciprm} and solve the resulting QP to obtain $\dot{\mathbf{s}}^\diamond$.
	\end{algorithmic}
\end{algorithm}
\subsection{Algorithms for CIP with ME-QAM and RM-QAM}
The non-convex MIQPs \eqref{eq:cipme} and \eqref{eq:ciprm} can be 
solved to global optimality by enumerating all feasible $\boldsymbol{\psi}$, each 
yielding an LCQP that can be solved efficiently by optimization toolboxes such as CVX. This procedure is referred to as the \emph{full-search QP (FS-QP)} algorithm. Although optimal, FS-QP incurs 
an exponential complexity substantially exceeding that of 
QAM-based CIP. Motivated by the sign alignment analysis in Section~\ref{sec3}, we propose the following \emph{predicted-sign QP (PS-QP)} algorithm, which generates the predicted sign pattern with a closed-form expression.

Let $\mathcal{I}_\mathrm{SF}$ and $\mathcal{I}_\mathrm{F}$ denote 
the index sets of entries in $\dot{\mathbf{s}}$ with SF and fixed-sign regions, respectively. For 
ME-QAM, $\mathcal{I}_\mathrm{SF} = \mathcal{I}_\mathrm{end,ME}$ and 
$\mathcal{I}_\mathrm{F} = \mathcal{I}_\mathrm{in,ME}$ in 
\eqref{eq:cipme}. For RM-QAM, $\mathcal{I}_\mathrm{SF} = 
\mathcal{I}_2 + K$ and $\mathcal{I}_\mathrm{F} = 
(\mathcal{I}_1+K)\cup\bigcup_{n=1}^4\mathcal{I}_n$ 
in \eqref{eq:ciprm}. PS-QP first generates a predicted sign pattern as
\begin{equation}\label{eq:sgn_pre}
	\hat{\boldsymbol{\psi}}=\operatorname{sgn}
	(\dot{\mathbf{H}}_{\mathcal{I}_\mathrm{SF}}
	\dot{\mathbf{H}}^\dagger_{\mathcal{I}_\mathrm{F}}
	\dot{\mathbf{s}}_{\mathcal{I}_\mathrm{F}}).
\end{equation}
Particularly for RM-QAM, each entry in $\dot{\mathbf{s}}_{\mathcal{I}_4}$ is 
fixed to the boundary of its feasible region
when performing sign predictions. $\boldsymbol{\psi}$ is then 
substituted by $\hat{\boldsymbol{\psi}}$ in \eqref{eq:cipme} (or 
\eqref{eq:ciprm}), and the resulting LCQP is solved to obtain 
the suboptimal solution $\dot{\mathbf{s}}^\diamond$. The procedure of obtaining $\dot{\mathbf{s}}^\diamond$ is summarized in \textbf{Algorithm~\ref{alg:sps}}.

\begin{figure}[t]
	\begin{subfigure}[t]{0.47\linewidth}
		\centering
		\includegraphics[width=\linewidth]{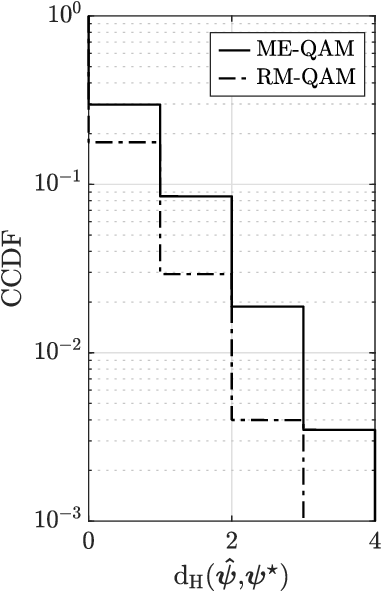}
		\caption{}
		\label{fig:hamdist}
	\end{subfigure}
	\begin{subfigure}[t]{0.47\linewidth}
		\centering
		\includegraphics[width=\linewidth]{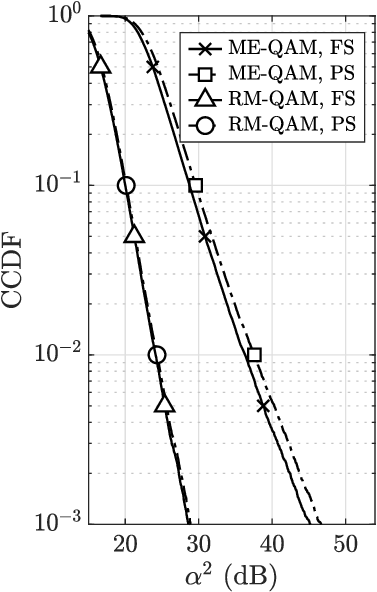}
		\caption{}
		\label{fig:algccdf}
	\end{subfigure}
	\caption{Comparison between PS-QP and FS-QP with $64$-ary ME-QAM and $16$-ary RM-QAM under $16\times 16$ MIMO. (a) CCDF of the Hamming distance between the optimal sign pattern $\boldsymbol{\psi}^\star$ obtained by FS-QP and the predicted sign pattern $\hat{\boldsymbol{\psi}}$. (b) CCDF of $\alpha^2$ obtained by FS-QP and PS-QP.}
	\label{fig:alg}
\end{figure}

The per-symbol-duration computational complexity of PS-QP consists of two 
parts: the sign prediction step and the LCQP solution step. For the sign 
prediction, the required matrix inversion $(\dot{\mathbf{H}}_{\mathcal{I}_\mathrm{F}}
\dot{\mathbf{H}}_{\mathcal{I}_\mathrm{F}}^T)^{-1}$ can be efficiently obtained 
from the precomputed $\mathbf{Q}=(\dot{\mathbf{H}}\dot{\mathbf{H}}^T)^{-1}$ 
via the block matrix inversion identity~\cite{thematrix}:
\begin{equation}
	\dot{\mathbf{H}}^\dagger_{\mathcal{I}_\mathrm{F}} = \dot{\mathbf{H}}^\mathsf{T}_{\mathcal{I}_\mathrm{F}}(\mathbf{Q}_{\mathcal{I}_\mathrm{F}\mathcal{I}_\mathrm{F}} - \mathbf{Q}_{\mathcal{I}_\mathrm{F}\mathcal{I}^\mathsf{c}_\mathrm{F}} \mathbf{Q}_{\mathcal{I}^\mathsf{c}_\mathrm{F}\mathcal{I}^\mathsf{c}_\mathrm{F}}^{-1} \mathbf{Q}_{\mathcal{I}^\mathsf{c}_\mathrm{F}\mathcal{I}_\mathrm{F}})
\end{equation}
where the only per-symbol matrix inversion is $\mathbf{Q}_{\mathcal{I}^\mathsf{c}_\mathrm{F}\mathcal{I}^\mathsf{c}_\mathrm{F}}^{-1}$ of dimension $|\mathcal{I}^\mathsf{c}_\mathrm{F}|=\mathcal{O}(K/L)$, 
costing $\mathcal{O}((K/L)^3)$, while the subsequent matrix-matrix multiplications 
cost $\mathcal{O}(K^3/L)$, which dominates the sign prediction step. The LCQP solve 
using a standard interior-point method costs $\mathcal{O}(\bar{N}^{3.5})$ 
\cite{Boyd2004}, where $\bar{N}$ denotes the average number of 
entries in $\dot{\mathbf{s}}$ not fixed to singletons. For ME-QAM, RM-QAM, 
and QAM, we have $\bar{N}={2K}/{L}$, ${(2L+1)K}/{L^2}$, 
and ${4K}/{L}$, respectively, all of which scale as $\mathcal{O}(K/L)$ 
for fixed $L$, giving the same asymptotic QP complexity order 
$\mathcal{O}((K/L)^{3.5})$. Therefore, the overall per-symbol complexity of PS-QP is 
$\mathcal{O}(K^3/L + (K/L)^{3.5})$, which is significantly lower than  $\mathcal{O}(2^{|\mathcal{I}_\mathrm{SF}|}(K/L)^{3.5})$ 
of FS-QP, and comparable to that of QAM-based CIP at 
$\mathcal{O}((K/L)^{3.5})$.

\section{Simulation Results and Discussions}\label{sec5}
The objectives of the simulations are \begin{enumerate*}
	\item to validate the suboptimality of PS-QP against FS-QP;
	\item to demonstrate the reduction in $\alpha^2$ achieved by the proposed RBC schemes;
	\item to demonstrate the advantage of the proposed schemes in SER and examine the cases under imperfect CSI and channel coding.
\end{enumerate*}

\subsection{System Setup and Baselines}
All simulations are performed under the MU-MIMO system described in Section~\ref{sec2}. 

Unless otherwise specified, PS-QP is used for the CIP problem for both ME-QAM \eqref{eq:cipme} and RM-QAM \eqref{eq:ciprm}.

For comparison, three baseline schemes are considered: QAM-based CIP \eqref{eq:cipqam}, PSK-based CIP~\cite{Li2018}, and linear ZF precoding with QAM, where the latter serves as a linear precoding benchmark. Each baseline will be labeled as QAM, PSK, and ZF in the figures, respectively.

\begin{figure}[t]
	\centering
	\includegraphics[width=0.44\textwidth]{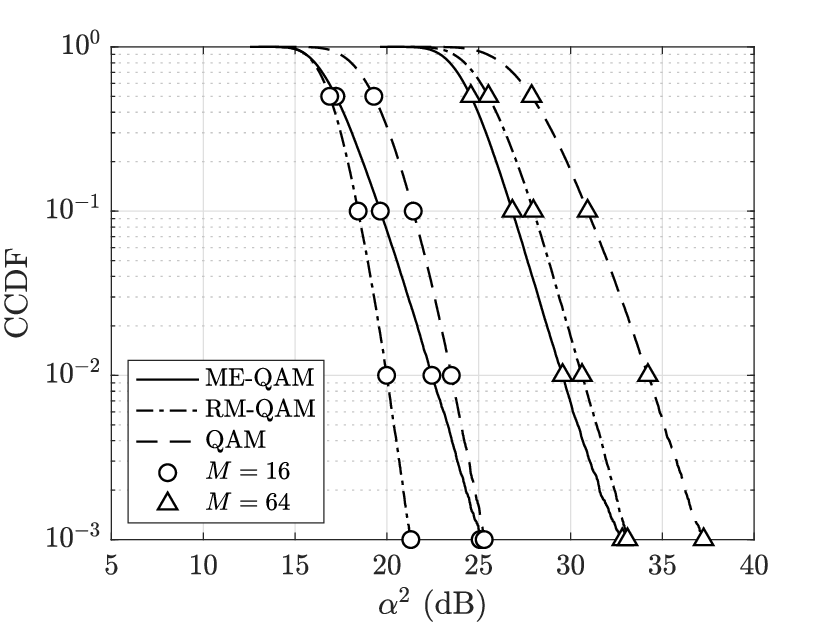}
	\caption{CCDF of $\alpha^2$ for ME-QAM, RM-QAM, and QAM with $M \in \{16, 64\}$, evaluated under $64\times64$ MIMO. A leftward shift of the CCDF curve indicates a stochastic reduction in $\alpha^2$ and improved CIP performance. }
	\label{fig:ccdf}
\end{figure}

\subsection{Experiment 1: PS-QP vs. FS-QP}
This experiment validates the sign prediction accuracy of PS-QP and its performance gap from FS-QP with $64$-ary ME-QAM and $16$-ary RM-QAM under $16\times 16$ MIMO. Fig.~\ref{fig:hamdist} shows the CCDF of the Hamming distance $\mathrm{d}_\mathrm{H}(\hat{\boldsymbol{\psi}},\boldsymbol{\psi}^\star)$, i.e., the number of sign disagreements between the predicted and optimal sign patterns. The prediction is exact (i.e., $\mathrm{d}_\mathrm{H}=0$) in approximately $80\%$ and $70\%$ of channel realizations for RM-QAM and ME-QAM, respectively, and the probability of $\mathrm{d}_\mathrm{H}$ exceeding $3$ is under $1\%$ for both. These results demonstrate that PS-QP predicts the optimal sign pattern with high accuracy. Fig.~\ref{fig:algccdf} shows the CCDF of $\alpha^2$ for FS-QP and PS-QP. For RM-QAM, the two curves are indistinguishable, while ME-QAM incurs a gap of approximately $1$\,dB, confirming that PS-QP achieves near-optimal performance at a substantially reduced complexity.

\subsection{Experiment 2: CCDF of $\alpha^2$}
This experiment demonstrates the reduction in $\alpha^2$ of the proposed schemes compared to QAM-based CIP. Fig.~\ref{fig:ccdf} 
compares the CCDF of $\alpha^2$ for the proposed schemes against QAM 
under $64\times 64$ MIMO with $M\in\{16,64\}$.

For $M=64$, both ME-QAM and RM-QAM exhibit a clear leftward shift 
of the CCDF relative to QAM across the entire evaluated range, 
corresponding to reductions in $\alpha^2$ of approximately $5$~dB 
and $4$~dB, respectively. The observed stochastic dominance confirms 
both a smaller average $\alpha^2$ and a reduced probability of 
high-$\alpha^2$ events.

For $M=16$, RM-QAM maintains a consistent reduction of over $3$~dB 
across the evaluated CCDF range. The improvement of ME-QAM, however, 
is less consistent, with its CCDF curve intersecting that of QAM 
near the $10^{-3}$ level. This degradation is attributable to two 
factors that are both more pronounced at smaller $M$: the $E_s$ 
penalty relative to QAM, and the outward shift of the SF region 
boundaries in ME-PAM by a factor of $L/(L-1)$ 
according to \eqref{eq:mepam} and \eqref{eq:pam}, which removes lower-energy candidates from 
the feasible set and increases the baseline $\alpha^2$.

\subsection{Experiment 3: SER under perfect CSI}

\begin{figure}[t]
	\centering
	\begin{subfigure}[t]{0.8\linewidth}
		\centering
		\includegraphics[width=\linewidth]{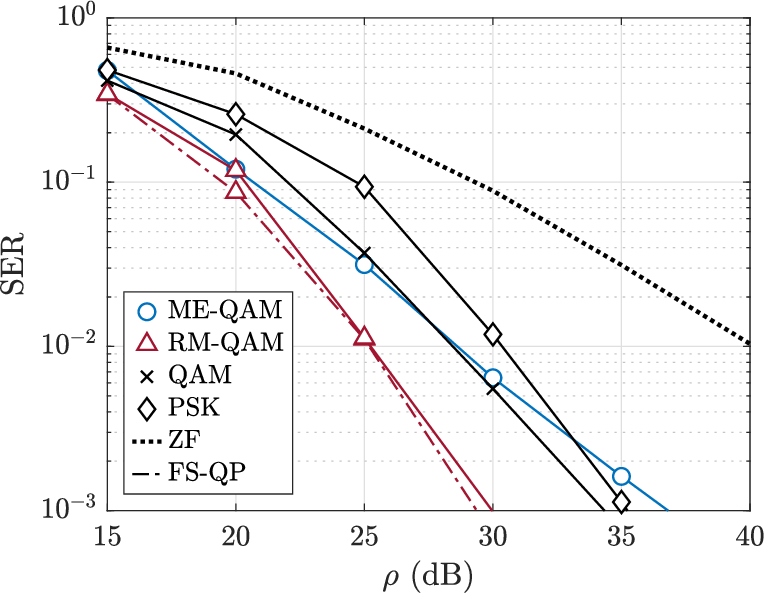}
		\caption{$M=16$}
		\label{fig:ser_k16_m16}
	\end{subfigure}
	\begin{subfigure}[t]{\linewidth}
		\centering
		\includegraphics[width=0.8\linewidth]{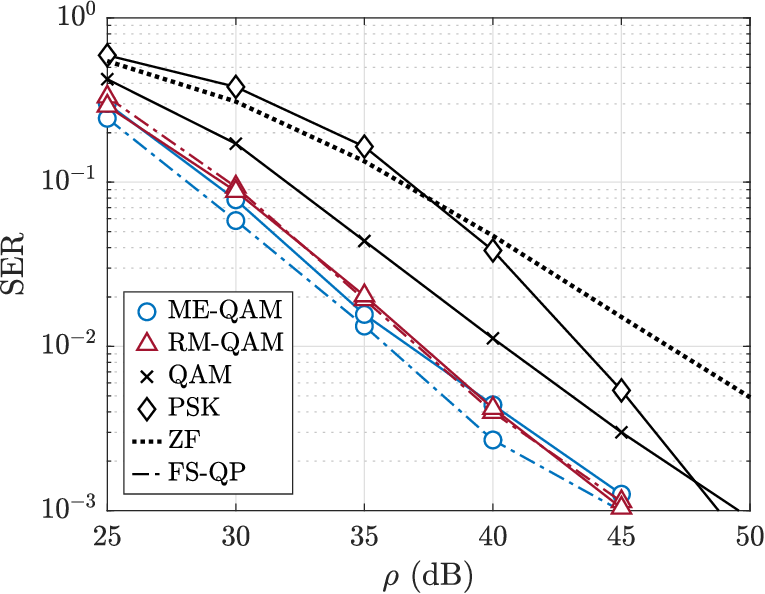}
		\caption{$M=64$}
		\label{fig:ser_k16_m64}
	\end{subfigure}
	\caption{SER versus $\rho$ for a $16\times16$ MIMO 
		system under perfect CSI, with (a) $M=16$ and (b) $M=64$. CIP 
		with ME-QAM and RM-QAM is compared against QAM- and PSK-based 
		CIP and ZF precoding with QAM. FS-QP results are included for 
		RM-QAM in both cases and for ME-QAM in (b), confirming the 
		near-optimality of PS-QP.}
	\label{fig:ser_k16}
\end{figure}

\begin{figure}[t]
	\centering
	\begin{subfigure}[t]{0.8\linewidth}
		\centering
		\includegraphics[width=\linewidth]{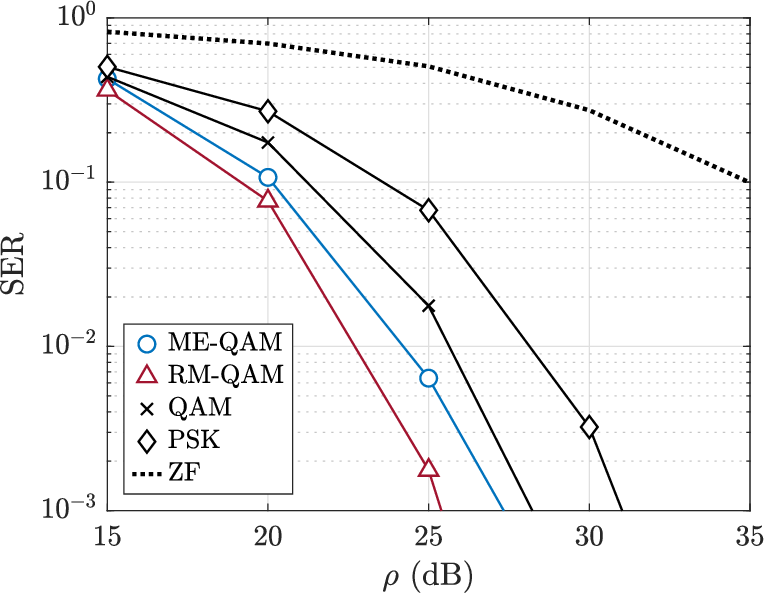}
		\caption{$M=16$}
		\label{fig:ser_k64_m16}
	\end{subfigure}
	\begin{subfigure}[t]{\linewidth}
		\centering
		\includegraphics[width=0.8\linewidth]{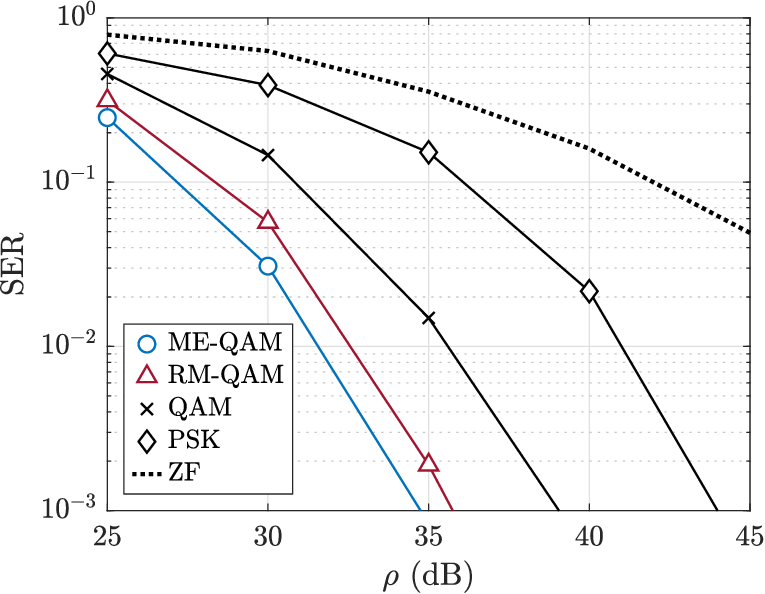}
		\caption{$M=64$}
		\label{fig:ser_k64_m64}
	\end{subfigure}
	\caption{SER versus $\rho$ for a $64\times64$ MIMO 
		system under perfect CSI, with (a) $M=16$ and (b) $M=64$. CIP 
		with ME-QAM and RM-QAM is compared against QAM- and PSK-based 
		CIP and ZF precoding with QAM.}
	\label{fig:ser_k64}
\end{figure}

\begin{figure}[t]
	\centering
	\begin{subfigure}[t]{0.8\linewidth}
		\centering
		\includegraphics[width=\linewidth]{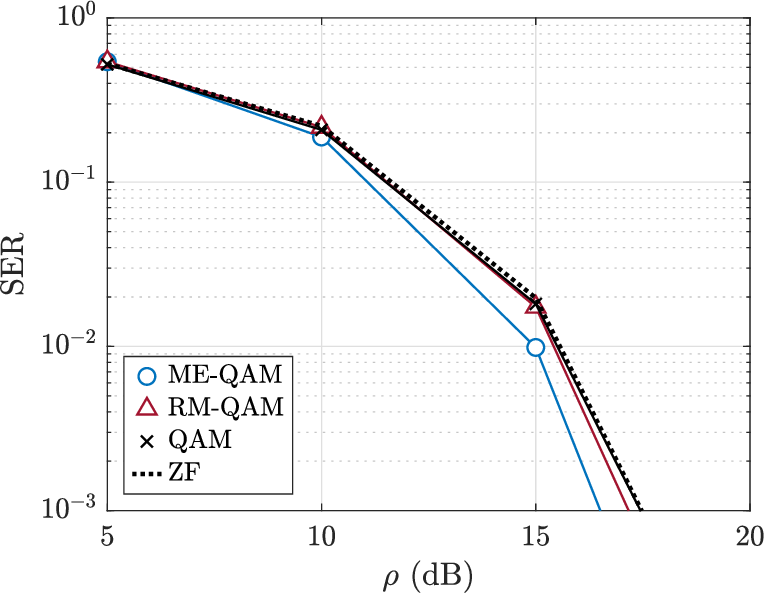}
		\caption{$M=16$}
		\label{fig:ser_k32_m16}
	\end{subfigure}
	\begin{subfigure}[t]{\linewidth}
		\centering
		\includegraphics[width=0.8\linewidth]{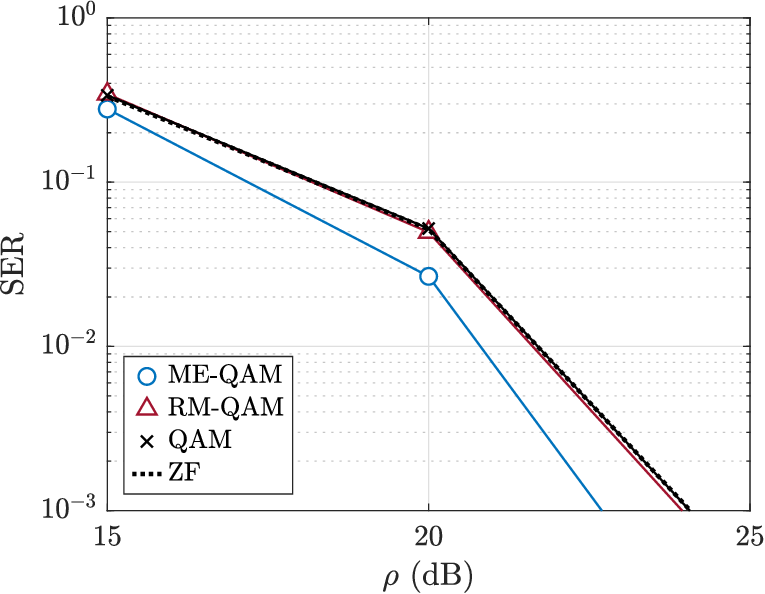}
		\caption{$M=64$}
		\label{fig:ser_k32_m64}
	\end{subfigure}
	\caption{SER versus $\rho$ for a $64\times32$ MIMO 
		system under perfect CSI, with (a) $M=16$ and (b) $M=64$. CIP 
		with ME-QAM and RM-QAM is compared against QAM-based 
		CIP and ZF precoding.}
	\label{fig:ser_k32}
\end{figure}

This experiment demonstrates the SER performance with respect to transmit SNR $\rho = 1/\sigma^2$ of the proposed schemes under different MIMO configurations.

Fig.~\ref{fig:ser_k16} presents results for a $16\times16$ MIMO system 
with $M\in\{16,64\}$. FS-QP results are included for RM-QAM in 
Fig.~\ref{fig:ser_k16_m16} and for both schemes in Fig.~\ref{fig:ser_k16_m64}, 
confirming the near-optimality of PS-QP. For $M=16$, RM-QAM achieves more than $4$~dB gain over all baselines at $\text{SER}=10^{-3}$, while ME-QAM exhibits diminishing gain over QAM at high SNR, consistent with the CCDF results in Fig.~\ref{fig:ccdf}. For $M=64$, 
ME-QAM and RM-QAM exhibit similar performance, both achieving 
approximately $4$~dB gain over the baselines.

Interestingly, PSK outperforms QAM in CIP at high SNR despite its 
higher symbol energy. We attribute this to the fact that every PSK 
symbol maps to a non-singleton feasible region, maintaining 
consistent feasible set dimensions across symbol realizations. In 
contrast, QAM occasionally produces symbol vectors $\mathbf{s}$ 
with severely limited feasibility, leading to large $\alpha^2$ and 
degraded SER. This effect is more pronounced in smaller MIMO 
systems where the available DoF are inherently 
limited.

Fig.~\ref{fig:ser_k64} extends the evaluation to a $64\times64$ MIMO 
system. The additional DoF benefit all CIP-based 
schemes. For $M=16$, RM-QAM achieves a $3$~dB SNR gain over QAM 
at $\text{SER}=10^{-3}$; for $M=64$, ME-QAM achieves a slightly 
higher gain of $4$~dB at the same SER level.

Fig.~\ref{fig:ser_k32} considers an underdetermined $64\times32$ 
($N_\mathrm{t}\times K$) MIMO configuration. In this regime, 
$(\dot{\mathbf{H}}\dot{\mathbf{H}}^\mathsf{T})^{-1}$ is close to 
a scaled identity, so the CIP objective approximates 
$\|\dot{\mathbf{s}}\|^2$ and is minimized by the smallest feasible 
$\dot{\mathbf{s}}$, driving most constraints to be active at the 
optimum and causing CIP to degenerate approximately to ZF 
precoding. This explains the nearly identical results between 
QAM-based CIP and ZF. In contrast, the 
sign flexibility of ME-QAM can still be exploited despite the 
active constraints, yielding approximately $1$~dB gain over the 
baselines. For RM-QAM, the performance gain from the free DoF is offset 
by the penalty in $E_s$, 
resulting in performance comparable to the baselines.

\subsection{Experiment 4: SER under Imperfect CSI}
\begin{figure}[t]
	\centering
	\begin{subfigure}[t]{0.8\linewidth}
		\centering
		\includegraphics[width=\linewidth]{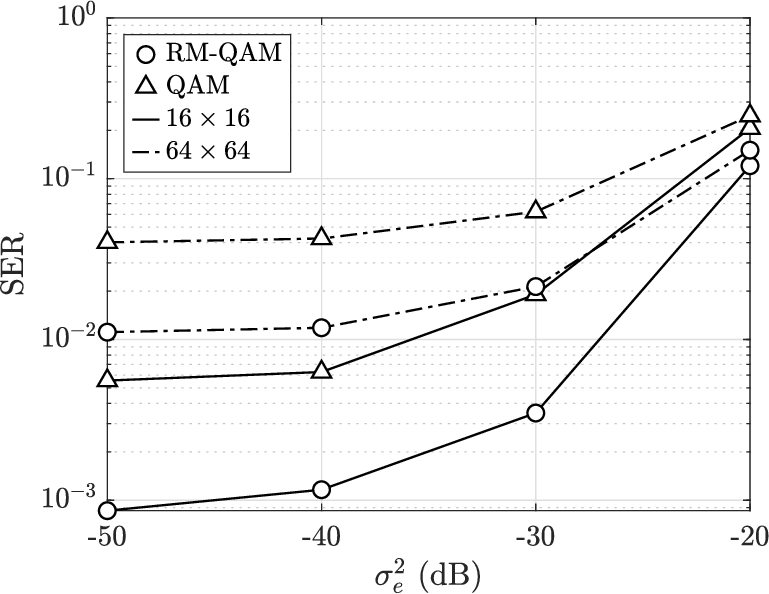}
		\caption{$M=16$}
		\label{fig:ser_icsi_m16}
	\end{subfigure}
	\begin{subfigure}[t]{\linewidth}
		\centering
		\includegraphics[width=0.8\linewidth]{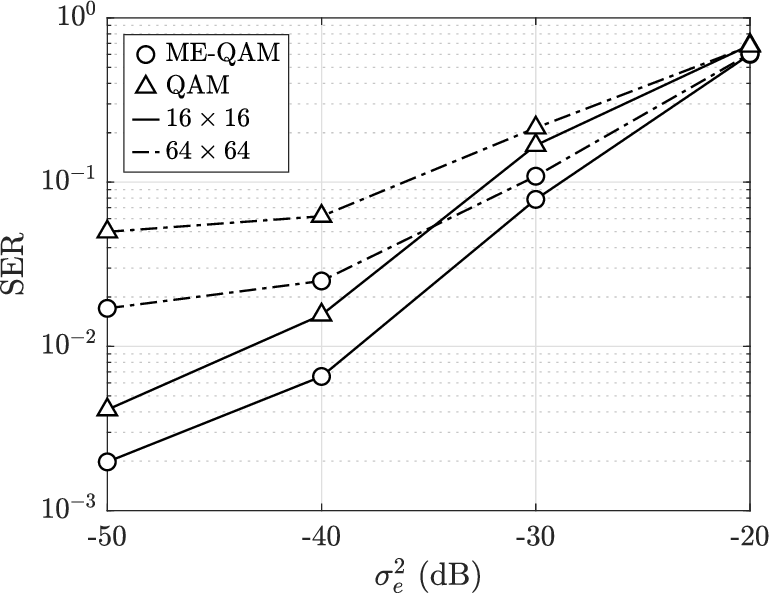}
		\caption{$M=64$}
		\label{fig:ser_icsi_m64}
	\end{subfigure}
	\caption{SER versus $\sigma_e^2$ for $16\times16$ 
		and $64\times64$ MIMO under imperfect CSI. (a) RM-QAM and (b) ME-QAM 
		are each compared against QAM-based CIP. The transmit SNR is set to 
		$\rho=30$\,dB and $\rho=25$\,dB for the $16\times16$ and $64\times64$ 
		systems with $M=16$, and $\rho=45$\,dB and $\rho=35$\,dB for $M=64$, 
		corresponding to an SER of approximately $10^{-3}$ under perfect CSI.}
	\label{fig:ser_icsi}
\end{figure}

This experiment examines whether the performance gains in SER are consistent under imperfect CSI. The channel 
estimate is modeled as $\hat{\mathbf{H}} = \mathbf{H} + \mathbf{E}$, 
with entries of $\mathbf{E}$ drawn i.i.d.\ from 
$\mathcal{CN}(0,\sigma_e^2)$, consistent with pilot-based estimation 
and widely adopted in the MIMO literature~\cite{Sohrabi2020, 
	Yoo2006, Wang2007}. All BS signal processing operates on 
$\hat{\mathbf{H}}$ in place of the true $\mathbf{H}$.

Fig.~\ref{fig:ser_icsi} presents the SER as a function of $\sigma_e^2$. 
The transmit SNR is set to target an SER of approximately $10^{-3}$ 
under perfect CSI: $\rho = 30$~dB ($16\times16$) and $\rho = 25$~dB 
($64\times64$) for $M=16$, and $\rho = 45$~dB ($16\times16$) and 
$\rho = 35$~dB ($64\times64$) for $M=64$. For clarity, only RM-QAM 
is compared against QAM for $M=16$, and ME-QAM against QAM for 
$M=64$, as each represents the better-performing scheme in each 
case under perfect CSI.

As expected, the SER of all schemes degrades monotonically with 
$\sigma_e^2$. The proposed schemes maintain a clear advantage over 
QAM at $-50$ to $-30$ dB error levels; however, the gap narrows as 
$\sigma_e^2$ approaches $-20$~dB, where channel estimation error 
becomes the dominant performance bottleneck for all schemes. 
The results indicate
that the performance advantage of the proposed schemes is robust to moderate CSI 
imperfections.
\begin{figure}[t]
	\centering
	\includegraphics[width=0.35\textwidth]{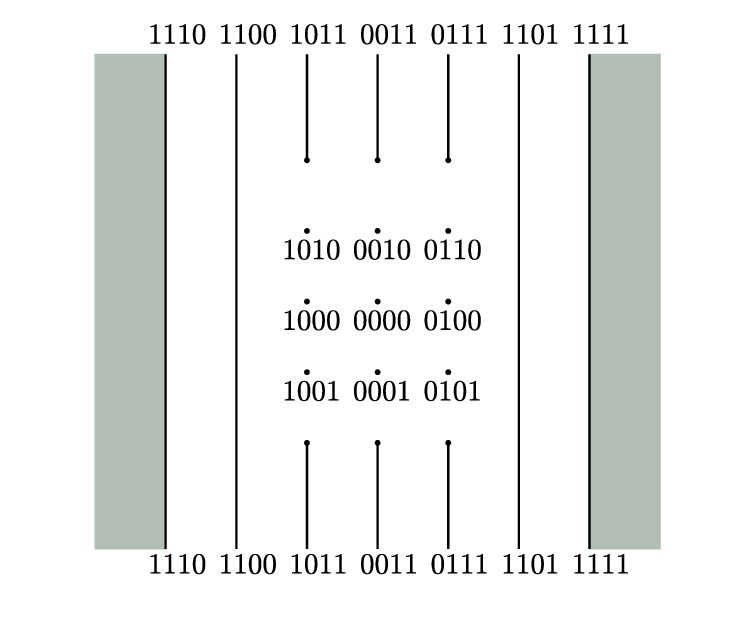}
	\vspace{-0.3cm}
	\caption{Bit mapping for $16$-ary RM-QAM with an average Hamming distance of $1.1$ per adjacent symbol pair.}
	\label{fig:bitmap}
\end{figure}

\subsection{Experiment 5: BLER}
\begin{figure}[t]
	\centering
	\begin{subfigure}[t]{0.8\linewidth}
		\centering
		\includegraphics[width=\linewidth]{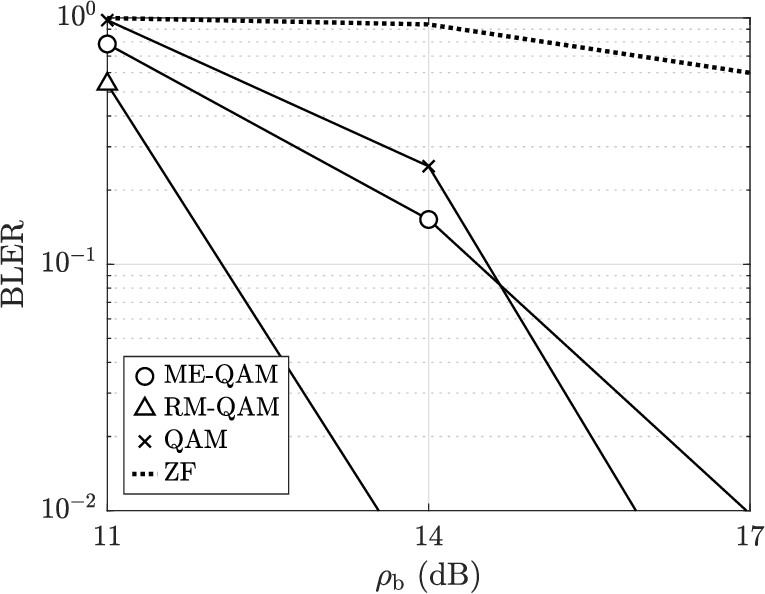}
		\caption{$M=16$}
		\label{fig:bler_m16}
	\end{subfigure}
	\begin{subfigure}[t]{\linewidth}
		\centering
		\includegraphics[width=0.8\linewidth]{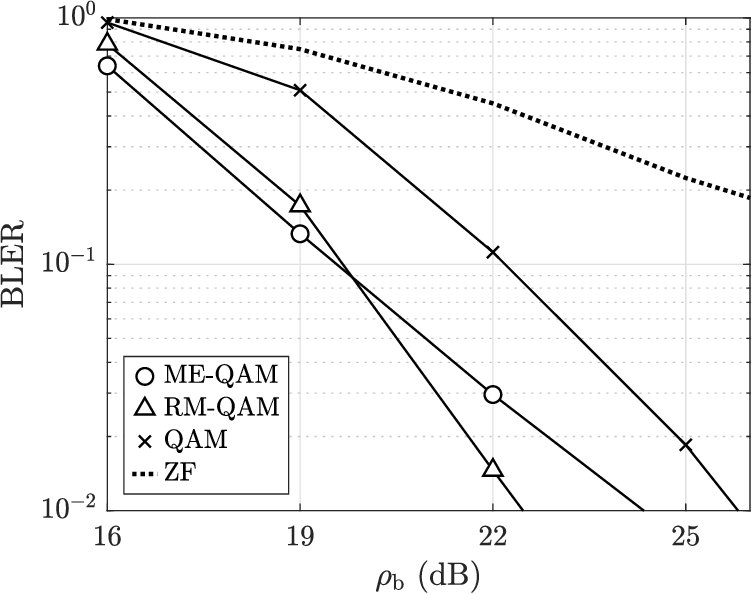}
		\caption{$M=64$}
		\label{fig:bler_m64}
	\end{subfigure}
	\caption{BLER versus transmit SNR per information bit $\rho_\mathrm{b}$ 
		for a $16\times16$ MIMO system with rate-$3/4$ LDPC coding, with (a) 
		$M=16$ and (b) $M=64$. CIP with ME-QAM and RM-QAM is compared against 
		QAM-based CIP and ZF precoding.}
	\label{fig:bler}
\end{figure}
This experiment examines the BLER performance of the proposed 
schemes to validate their compatibility with channel coding. A 
rate-$3/4$ LDPC code is employed, owing 
to its capacity-approaching performance and widespread adoption in 
modern wireless standards such as IEEE 802.11 and 5G NR.

Gray mapping is applied to all schemes, which minimizes the average 
Hamming distance between adjacent feasible regions~\cite{Agrell2004}. For ME-QAM, a perfect Gray mapping is 
achievable due to its cyclic structure in each real dimension, 
analogous to PSK. For RM-QAM, however, a perfect Gray mapping is not 
attainable since some regions have more nearest neighbors than the 
number of bits per symbol. Our mappings for 16- and 64-ary 
RM-QAM achieve average Hamming distances of approximately $1.1$ 
and $1.2$ bits per adjacent symbol pair, respectively. The bit 
mapping for 16-ary RM-QAM is illustrated in Fig.~\ref{fig:bitmap}.

Fig.~\ref{fig:bler} shows the BLER performance for the $16\times16$ 
system, where $\rho_\mathrm{b} = \rho/(R\log_2 M)$ denotes the 
transmit SNR per information bit with code rate $R=3/4$. For 
$M=16$, RM-QAM achieves approximately $2$~dB gain over QAM at 
$\text{BLER}=10^{-2}$, while ME-QAM underperforms QAM at low BLERs but recovers to achieve $1$~dB gain at $\text{BLER}=10^{-2}$.
For $M=64$, both schemes outperform QAM, with RM-QAM achieving a higher $3$~dB gain at $\text{BLER}=10^{-2}$.

\section{Conclusion}\label{sec6}
In this paper, we studied constellation design for CIP in MU-MIMO systems. By revisiting QAM-based CIP, we analytically showed that the misalignment between the CI regions and the objective-minimizing sign pattern fundamentally limits performance, and that introducing SF regions can significantly alleviate this limitation. Based on the analysis, we proposed the RBC model to lift the restrictions in the conventional CI region model. ME-QAM and RM-QAM RBC schemes with improved sign-alignment capability were developed and shown by simulations to achieve superior performance over QAM in CIP. Additionally, the PS-QP algorithm was developed to solve the resulting MIQP with complexity comparable to QAM-based CIP, while achieving near-identical performance to the optimal FS-QP algorithm. Given the advantage of the RBC model in CIP, its extension to other systems with non-bijective modulation represents a promising direction for future research.
\appendices
\section{Proof of \textit{Proposition}~\ref{prop1}}\label{apl1}
Each row of $\dot{\mathbf{H}}_{\mathcal{I}_\mathrm{in}}$ is in the form of either $[\Re(\mathbf{h}_k^\mathsf{T}), -\Im(\mathbf{h}_k^\mathsf{T})]$ or $[\Im(\mathbf{h}_k^\mathsf{T}), \Re(\mathbf{h}_k^\mathsf{T})]$ for some user $k$. By~\cite{deluna2018realcomplex}, the diagonal element of $(\dot{\mathbf{H}}_{\mathcal{I}_\mathrm{in}}\dot{\mathbf{H}}_{\mathcal{I}_\mathrm{in}}^\mathsf{T})^{-1}$ follows a scaled inverse-$\chi^2$ distribution with scale 2 and with DoF depending on whether the corresponding row has its pair also retained in $\dot{\mathbf{H}}_{\mathcal{I}_\mathrm{in}}$. A row whose paired row is absent corresponds to $2N_\mathrm{t}-|\mathcal{I}_\mathrm{in}|+1$ DoF. Otherwise, the DoF is $2N_\mathrm{t}-|\mathcal{I}_\mathrm{in}|+2$. Acknowledging that the expected value of the scaled inverse-$\chi^2$ variable 
with $\nu>2$ DoF is $2/(\nu-2)$, the expected value of each diagonal element in $(\dot{\mathbf{H}}_{\mathcal{I}_\mathrm{in}}\dot{\mathbf{H}}_{\mathcal{I}_\mathrm{in}}^\mathsf{T})^{-1}$ is lower-bounded by $2/(2N_\mathrm{t}-|\mathcal{I}_\mathrm{in}|)$. Therefore, we obtain the following bound for a given $|\mathcal{I}_\mathrm{in}|\le2N_\mathrm{t}-2$:
\begin{equation}\label{eq:approx_ex}
	\begin{aligned}
		\mathbb{E}\big(\mathrm{tr}\big((\dot{\mathbf{H}}_{\mathcal{I}_\mathrm{in}}
		\dot{\mathbf{H}}_{\mathcal{I}_\mathrm{in}}^\mathsf{T})^{-1}\big)\big)
		\big|_{|\mathcal{I}_\mathrm{in}|}
		&= \mathrm{tr}\big(\ex\big((\dot{\mathbf{H}}_{\mathcal{I}_\mathrm{in}}\dot{\mathbf{H}}_{\mathcal{I}_\mathrm{in}}^\mathsf{T})^{-1}\big|_{|\mathcal{I}_\mathrm{in}|}\big)\big)\\
		&\ge \frac{2|\mathcal{I}_\mathrm{in}|}{2N_\mathrm{t}-|\mathcal{I}_\mathrm{in}|}.
	\end{aligned}
\end{equation}
Substituting \eqref{eq:approx_ex} into \eqref{eq:ap} gives
\begin{equation}\label{eq:a2}
	\begin{aligned}
		\mathbb{E}(\alpha'^{2}) &\ge E_{s,\mathrm{in}}\sum_{n=0}^{\min\{2K,2N_\mathrm{t}-2\}}\pr\{|\ids_\mathrm{in}|=n\}\frac{2n}{2N_\mathrm{t}-n}\\
		&\ge E_{s,\mathrm{in}}\frac{2(\overline{|\ids_\mathrm{in}|}-\epsilon)}{2N_\mathrm{t}-(\overline{|\ids_\mathrm{in}|}-\epsilon)}= \frac{2E_{s,\mathrm{in}}}{2N_\mathrm{t}/(\overline{|\ids_\mathrm{in}|}-\epsilon)-1}
	\end{aligned}
\end{equation}
where $\overline{|\mathcal{I}_\mathrm{in}|} = \mathbb{E}(|\mathcal{I}_\mathrm{in}|)$ 
is the expected number of interior symbols in $\dot{\mathbf{s}}$, 
$\epsilon \ge 0$ accounts for the probability mass on the excluded 
region $|\mathcal{I}_\mathrm{in}| > 2N_\mathrm{t}-2$, and the second 
step follows from Jensen's inequality since $2n/(2N_\mathrm{t}-n)$ 
is convex in $n$. According to \eqref{eq:dpci_pam}, we have $|\mathcal{I}_\mathrm{in}| \sim 
\mathrm{Binomial}(2K, (L-2)/L)$. Therefore, 
\begin{equation}\label{eq:num_in}
	\overline{|\mathcal{I}_\mathrm{in}|} = 2K(L-2)/{L}.
\end{equation}

When $K \le N_\mathrm{t}-1$, since $|\mathcal{I}_\mathrm{in}| \le 2K 
\le 2N_\mathrm{t}-2$ always holds, the sum is untruncated and 
$\epsilon = 0$ exactly. When $K = N_\mathrm{t}$, 
$\Pr\{|\mathcal{I}_\mathrm{in}| > 2N_\mathrm{t}-2\} \to 0$ as 
$K \to \infty$ by concentration of the binomial distribution, 
and therefore $\epsilon \to 0$. Substituting \eqref{eq:num_in} 
into \eqref{eq:a2}, together with $\epsilon = 0$ or $\epsilon \to 0$ 
from the above, yields \eqref{eq:ap_lb}. This completes the proof.

\section{Proof of \textit{Proposition}~\ref{prop2}}\label{apl2}
It suffices to consider a single $i \in \{1,\cdots,|\mathcal{I}_{\mathrm{end}}|\}$, 
as the argument extends identically to all entries, regardless of 
whether $i$ corresponds to the real or imaginary part of an entry 
of $\mathbf{s}'$ (i.e., the complex-valued version of $\dot{\mathbf{s}}'$). Without loss of generality, we assume $i$ 
corresponds to $\Re(s'_k)$ for some $k\in\{1,2,\cdots,K\}$. Let 
$\mathbf{D}\in \mathbb{C}^{K\times K}$ be the diagonal matrix with 
all diagonal entries equal to $1$ except the $k$th equal to 
$-s_k^{'*}/s'_k$, which satisfies $(-s_k^{'*}/s'_k)\cdot s'_k = 
-\Re(s'_k)+j\Im(s'_k)$, i.e., $\mathbf{D}$ flips $z'_i$ while 
leaving all other entries of $\mathbf{z}'$ unchanged. Since 
$|-s_k^{'*}/s'_k|=1$, $\mathbf{D}$ is unitary, and therefore
\begin{equation}
	\mathbf{s}'^{\mathsf{H}}(\mathbf{H}\mathbf{H}^\mathsf{H})^{-1}\mathbf{s}' 
	= (\mathbf{D}\mathbf{s}')^\mathsf{H}
	\big(\mathbf{D}\mathbf{H}(\mathbf{D}\mathbf{H})^\mathsf{H}\big)^{-1}
	\mathbf{D}\mathbf{s}',
\end{equation}
which indicates that $\mathbf{D}\mathbf{s}'$ is an optimal solution for the channel 
$\mathbf{D}\mathbf{H}$. Since $\mathbf{D}$ applies only a phase 
rotation to the $k$th row of $\mathbf{H}$, and the entries of 
$\mathbf{H}$ are i.i.d.\ with a distribution symmetric under sign 
inversion and complex conjugation, $\mathbf{D}\mathbf{H}$ and 
$\mathbf{H}$ are identically distributed. Consequently, $\mathbf{s}'$ 
and $\mathbf{D}\mathbf{s}'$ are equiprobable optimal solutions, and 
since they differ only in $z'_i$, we have $\Pr(z'_i=+1)=\Pr(z'_i=-1)=\tfrac{1}{2}$.

Since $z_i = \mathrm{sgn}(\Re(s_k))$ is determined solely by $m_k$, which is independent of the random variables that determine $z'_i$, $z_i$ is independent of $z'_i$. Therefore,
\begin{equation}
	\begin{aligned}
		\Pr(z'_i = z_i) &= \Pr(z'_i=+1)\Pr(z_i=+1) 
		\\&\quad+ \Pr(z'_i=-1)\Pr(z_i=-1)= \tfrac{1}{2}.
	\end{aligned}
\end{equation}
This completes the proof.

\section{SER Upper Bound of ME-QAM}\label{apme}
We prove that $P_\mathrm{e}$ of ME-QAM under detection based on the decision boundaries in Fig.~\ref{fig:me} satisfies the SER upper bound in~\eqref{eq:ub}. 

For $M$-ary ME-QAM, the real and imaginary parts of $s$ are detected independently as $L$-ary ME-PAM symbols. Let $\dot{s}$, $\dot{y}$, and $\dot{v}$ denote the real part of $s$, $\bar{y}$, and $v$, respectively. Since $v \sim \mathcal{CN}(0, \sigma^2)$, we have $\dot{v} \sim \mathcal{N}(0, \sigma^2/2)$, and thus $\alpha\dot{v} \sim \mathcal{N}(0, \bar{\sigma}^2/2)$. The decision boundaries in Fig.~\ref{fig:me} correspond to $\mathcal{B} = \{-L+1, -L+3, \dots, L-1\}$ in each real dimension. Denote the error probability per real dimension as $\dot{P}_{\mathrm{e}}$. For each symbol of the first case in \eqref{eq:mepam}, the nearest decision boundaries are at distance $d_{\min}/2 = 1$, giving~\cite{Proakis2008}
\begin{equation}
	\dot{P}_{\mathrm{e}}\big|_{\ell \in \{0,\dots,L-2\}} =2Q\bigg(\frac{1}{\sqrt{\bar{\sigma}^2/2}}\bigg)=  2Q\bigg(\sqrt{\frac{2}{\bar{\sigma}^2}}\bigg).
\end{equation}
For $\ell=L-1$, the nearest decision boundary $\operatorname{sgn}(\dot{s})(L-1)$ is no farther from any $\dot{s} \in \mathcal{R}(L-1)$ than from the boundary point $\operatorname{sgn}(\dot{s})L$. Therefore,
\begin{equation}
	\begin{aligned}
		&\dot{P}_{\mathrm{e}}\big|_{\ell=L-1} \le \dot{P}_{\mathrm{e}}\big|_{\ell=L-1,~\dot{s}=\operatorname{sgn}(\dot{s})L} \\
		&= Q\bigg(\frac{1}{\sqrt{\bar{\sigma}^2/2}}\bigg) - Q\bigg(\frac{2L-1}{\sqrt{\bar{\sigma}^2/2}}\bigg) \le Q\bigg(\sqrt{\frac{2}{\bar{\sigma}^2}}\bigg).
	\end{aligned}
\end{equation}
Letting $Q_0 = Q\big(\sqrt{2/\bar{\sigma}^2}\big)$, we obtain
\begin{equation}
	\dot{P}_{\mathrm{e}} \le \frac{L-1}{L} 2Q_0 + \frac{1}{L} Q_0 = \frac{2L-1}{L}Q_0.
\end{equation}
Since an ME-QAM symbol is correctly detected only when both its real and imaginary parts are correctly detected and both parts are independent, the overall SER is upper-bounded as
\begin{equation}
	P_{\mathrm{e}} = 1 - (1 - \dot{P}_{\mathrm{e}})^2 \le 2\dot{P}_{\mathrm{e}} \le \frac{4L-2}{L} Q_0.
\end{equation}
For any $M \ge 4$ (i.e., $L \ge 2$), $P_{\mathrm{e}}$ satisfies the upper bound in \eqref{eq:ub}, which covers the constellation sizes of interest.

\section{SER Upper Bound of RM-QAM}\label{aprm}
We prove that $P_\mathrm{e}$ of RM-QAM under detection based on the decision boundaries illustrated in Fig.~\ref{fig:rm} satisfies the upper bound in \eqref{eq:ub}.

The decision boundaries separate the detection of the real and imaginary parts. Let $P_{\mathfrak{R}}$ and $P_{\mathfrak{I}}$ denote the error probabilities of the real and imaginary parts, respectively. By arguments similar to those in Appendix~\ref{apme},
\begin{equation}
	\begin{aligned}
		&P_{\mathfrak{R}}\big|_{m\in\mathcal{M}_1\cup\mathcal{M}_2\cup\mathcal{M}_3} = 2Q_0, \quad P_{\mathfrak{R}}\big|_{m\in\mathcal{M}_4} \le Q_0,\\
		&P_{\mathfrak{I}}\big|_{m\in\mathcal{M}_1} = 2Q_0, \quad P_{\mathfrak{I}}\big|_{m\in\mathcal{M}_2} \le Q_0, \quad P_{\mathfrak{I}}\big|_{m\in\mathcal{M}_3\cup\mathcal{M}_4} = 0,
	\end{aligned}
\end{equation}
where the last equation is due to the fact that symbols in these groups are fully determined by their real part. Since a symbol is in error when at least one of its real and imaginary parts is incorrectly detected, the per-group error probabilities are bounded via the union bound, regardless of the dependence between the two parts, as
\begin{equation}
	\begin{aligned}
		&P_{\mathrm{e}}\big|_{m\in\mathcal{M}_1} \le P_{\mathfrak{R}}\big|_{m\in\mathcal{M}_1} + P_{\mathfrak{I}}\big|_{m\in\mathcal{M}_1} = 4Q_0,\\
		&P_{\mathrm{e}}\big|_{m\in\mathcal{M}_2} \le P_{\mathfrak{R}}\big|_{m\in\mathcal{M}_2} + P_{\mathfrak{I}}\big|_{m\in\mathcal{M}_2} \le 3Q_0,\\
		&P_{\mathrm{e}}\big|_{m\in\mathcal{M}_3} = P_{\mathfrak{R}}\big|_{m\in\mathcal{M}_3} = 2Q_0,\\
		&P_{\mathrm{e}}\big|_{m\in\mathcal{M}_4} = P_{\mathfrak{R}}\big|_{m\in\mathcal{M}_4} \le Q_0.
	\end{aligned}
\end{equation}
Therefore,
\begin{equation}
	\begin{aligned}
		P_{\mathrm{e}} &= \sum_{i=1}^{4}\frac{|\mathcal{M}_i|}{M}\, P_{\mathrm{e}}'\big|_{m\in\mathcal{M}_i}\\
		&\le \frac{Q_0}{L^2}\big(4(L-1) + 3(L-1) + (L-2) + 2\big)\\
		&= \frac{4L^2-3L+3}{L^2}Q_0.
	\end{aligned}
\end{equation}
For any $L \ge 4$, $P_{\mathrm{e}}$ satisfies the upper bound in \eqref{eq:ub}, which covers the constellation sizes of interest.

\section{Proof of \textit{Proposition}~\ref{prop4}}\label{apl3}
Let $\alpha^{2}$ denote the objective value of \eqref{eq:cipme} corresponding to the optimal solution $\dot{\mathbf{s}}^\star$. Let $\mathcal{I}^\mathsf{c}_B$ denote the complement of $\mathcal{I}_B$ in $\{1,2,\ldots,2K\}$ and $\alpha'^{2}$ denote the optimal objective of the \eqref{eq:cipme} with the constraints associated with $\mathcal{I}_B$ removed, so that $\Delta = \alpha^{2} - \alpha'^{2}$. Consider the auxiliary problem obtained by fixing the entries indexed by $\mathcal{I}^\mathsf{c}_B$ at $\dot{\mathbf{s}}_{\mathcal{I}^\mathsf{c}_B}$ and minimizing freely over $\mathbf{u} \in \mathbb{R}^{|\mathcal{I}_B|}$:
\begin{equation}
	J
	=
	\min_{\mathbf{u}\in\mathbb{R}^{|\mathcal{I}_B|}}
	\begin{bmatrix}
		\dot{\mathbf{s}}^\star_{\mathcal{I}^\mathsf{c}_B}\\
		\mathbf{u}
	\end{bmatrix}^{\!\mathsf{T}}
	\breve{\mathbf{Q}}
	\begin{bmatrix}
		\dot{\mathbf{s}}^\star_{\mathcal{I}^\mathsf{c}_B}\\
		\mathbf{u}
	\end{bmatrix},
\end{equation}
where 
\begin{equation}
	\breve{\mathbf{Q}} =
	\begin{bmatrix}
		\mathbf{Q}_{\mathcal{I}^\mathsf{c}_B\mathcal{I}^\mathsf{c}_B}
		&
		\mathbf{Q}_{\mathcal{I}^\mathsf{c}_B\mathcal{I}_B}
		\\
		\mathbf{Q}_{\mathcal{I}_B\mathcal{I}^\mathsf{c}_B}
		&
		\mathbf{Q}_{\mathcal{I}_B\mathcal{I}_B}
	\end{bmatrix}.
\end{equation}
Minimizing the quadratic form over $\mathbf{u}$ yields the optimal auxiliary solution
\begin{equation}
	\mathbf{u}
	=
	-\mathbf{Q}_{\mathcal{I}_B\mathcal{I}_B}^{-1}
	\mathbf{Q}_{\mathcal{I}_B\mathcal{I}^\mathsf{c}_B}
	\dot{\mathbf{s}}^\star_{\mathcal{I}^\mathsf{c}_B},
\end{equation}
and $J$ is given by the corresponding Schur complement~\cite{Boyd2004}:
\begin{equation}
	J
	=
	\dot{\mathbf{s}}_{\mathcal{I}^\mathsf{c}_B}^\mathsf{T}
	\big(
	\mathbf{Q}_{\mathcal{I}^\mathsf{c}_B\mathcal{I}^\mathsf{c}_B}
	-\mathbf{Q}_{\mathcal{I}^\mathsf{c}_B\mathcal{I}_B}
	\mathbf{Q}_{\mathcal{I}_B\mathcal{I}_B}^{-1}
	\mathbf{Q}_{\mathcal{I}_B\mathcal{I}^\mathsf{c}_B}
	\big)
	\dot{\mathbf{s}}_{\mathcal{I}^\mathsf{c}_B}.
\end{equation}
Expanding $\alpha^2=\dot{\mathbf{s}}^{\star\tp}\mathbf{Q}\dot{\mathbf{s}}^\star$ under the same block partition yields
\begin{equation}
	\alpha^2=
	\dot{\mathbf{s}}_{\mathcal{I}^\mathsf{c}_B}^{\star\mathsf{T}}
	\mathbf{Q}_{\mathcal{I}^\mathsf{c}_B\mathcal{I}^\mathsf{c}_B}
	\dot{\mathbf{s}}_{\mathcal{I}^\mathsf{c}_B}^\star
	+
	2\dot{\mathbf{s}}_{\mathcal{I}^\mathsf{c}_B}^{\star\mathsf{T}}
	\mathbf{Q}_{\mathcal{I}^\mathsf{c}_B\mathcal{I}_B}
	\dot{\mathbf{s}}^\star_{\mathcal{I}_B}
	+
	\dot{\mathbf{s}}_{\mathcal{I}_B}^{\star\mathsf{T}}
	\mathbf{Q}_{\mathcal{I}_B\mathcal{I}_B}
	\dot{\mathbf{s}}^\star_{\mathcal{I}_B}.
\end{equation}
Therefore,
\begin{equation}
	\begin{aligned}
		&\alpha^2 - J\\
		=&
		\dot{\mathbf{s}}_{\mathcal{I}^\mathsf{c}_B}^\mathsf{T}
		\mathbf{Q}_{\mathcal{I}^\mathsf{c}_B\mathcal{I}_B}
		\mathbf{Q}_{\mathcal{I}_B\mathcal{I}_B}^{-1}
		\mathbf{Q}_{\mathcal{I}_B\mathcal{I}^\mathsf{c}_B}
		\dot{\mathbf{s}}_{\mathcal{I}^\mathsf{c}_B}
		+2\dot{\mathbf{s}}_{\mathcal{I}^\mathsf{c}_B}^\mathsf{T}
		\mathbf{Q}_{\mathcal{I}^\mathsf{c}_B\mathcal{I}_B}
		\dot{\mathbf{s}}_{\mathcal{I}_B}
		\\&+\dot{\mathbf{s}}_{\mathcal{I}_B}^\mathsf{T}
		\mathbf{Q}_{\mathcal{I}_B\mathcal{I}_B}
		\dot{\mathbf{s}}_{\mathcal{I}_B} \\
		=&
		\Big(
		\mathbf{Q}_{\mathcal{I}_B\mathcal{I}^\mathsf{c}_B}\dot{\mathbf{s}}_{\mathcal{I}^\mathsf{c}_B}
		+
		\mathbf{Q}_{\mathcal{I}_B\mathcal{I}_B}\dot{\mathbf{s}}_{\mathcal{I}_B}
		\Big)^\mathsf{T}
		\mathbf{Q}_{\mathcal{I}_B\mathcal{I}_B}^{-1}
		\Big(
		\mathbf{Q}_{\mathcal{I}_B\mathcal{I}^\mathsf{c}_B}\dot{\mathbf{s}}_{\mathcal{I}^\mathsf{c}_B}
		\\&+
		\mathbf{Q}_{\mathcal{I}_B\mathcal{I}_B}\dot{\mathbf{s}}_{\mathcal{I}_B}
		\Big) \\
		=&\;
		\mathbf{g}_{\mathcal{I}_B}^\mathsf{T}
		\mathbf{Q}_{\mathcal{I}_B\mathcal{I}_B}^{-1}
		\mathbf{g}_{\mathcal{I}_B} = \underline{\Delta}.
	\end{aligned}
\end{equation}
Since fixing $\dot{\mathbf{s}}_{\mathcal{I}^\mathsf{c}_B}$ restricts the feasible set of the fully relaxed problem, we have $J \geq \alpha'^2$. Therefore,
\begin{equation}
	\Delta
	=
	\alpha^2 - \alpha'^2
	\;\ge\;
	\alpha^2 - J
	=
	\underline{\Delta}.
\end{equation}
This completes the proof.
\balance
\bibliographystyle{IEEEtran}
\bibliography{rbc_ref}

\end{document}